    \newlength{\myfootnotesep}
\theoremstyle{plain}
\newtheorem{proposition}{Proposition}
\newtheorem{assumption}{Assumption}
\newtheorem{lemma}{Lemma}[section]
\tikzset{set/.style={draw,circle,inner sep=0pt,align=center}}
\title{Three’s a crowd: Identification challenges in the triple difference model with spillover effects
}
\author{Silvia De Nicolò\footnote{\texttt{silvia.denicolo@unibo.it}}, Beatrice Biondi, Mario Mazzocchi}
\date{Università di Bologna}
\begin{document}
\maketitle
	
\begin{abstract}
The paper studies identification in triple-difference designs when spillover effects contaminate one or more control groups. We show that, under conventional identifying assumptions, the triple-difference model fails to identify both the treatment effect and the spillover effect under such interference. To overcome this limitation, we propose an alternative specification, the double–triple-difference model, and explicitly formalize identifying assumptions and spillover structures required for consistent identification of both effects. We derive formal identification results and assess the performance of the proposed model through Monte Carlo simulations. An empirical application evaluating a Special Economic Zone in Italy is provided.\\
\\
 \textbf{Keywords}: Causal inference; Policy evaluation; Spillover effects; Triple difference.\\
\textbf{JEL codes}: C21, C18, C22, D04. \\
    \end{abstract}

\section{Introduction}
The estimation of policy treatment effects using triple-difference (TD) designs has become increasingly common in applied economics. A Scopus search for articles with the keyword ``triple difference'' in journals with ``economic(s)'' or ``econometric(s)'' in the title or abstract yields 327 documents between 2005 and 2025; 231 of these were published between 2020 and 2025. 

The growing popularity of the TD stems from its appealing features, which naturally relax the so-called parallel trends assumption commonly required in DiD frameworks, while maintaining a parsimonious and interpretable structure \citep{olden2022triple}. 
The TD approach extends the standard DiD setup by introducing an additional dimension of comparison. Treated observations are identified by the intersection of (i) a primary treatment partition (e.g., jurisdictions that adopt a policy) and (ii) a secondary partition (e.g., subpopulations eligible for the policy). As outlined by \citet{olden2022triple}, the TD framework allows for deviations from the parallel trends assumption provided that such deviations are constant between sets of a given partition. This softer requirement, usually named the \emph{parallel trend-in-trends} assumption, is sufficient because any common bias in two DiD estimators is differenced out when taking the third difference.

Other claimed advantages of the TD specifications relate to their ability to accommodate group-level heterogeneity in treatment effects \citep[e.g.][]{caron2025triple, Leung2021, Liu2020}, and control for spillover effects \citep{Bellou2013, yu2021evaluating, Lee2022}. Regarding the latter, estimation and interpretation of spillover effects through TD models is widespread in applied work, yet it has received limited formal treatment in the methodological literature. The review by \citet{olden2022triple} also endorses this feature, without providing an explicit formalization of the identifying assumptions in case of spillover effects. In contrast, other empirical studies are more cautious when dealing with TD designs in presence of spillover effects \citep[e.g.,][]{Curtis2018}. 

The recent proliferation of TD applications therefore calls for a more rigorous examination of the estimator's identification properties and underlying assumptions \citep{Roth2023}. Recent work has advanced our understanding of TD models in a range of environments as discussed hereinafter in Section \ref{sec:empirical_lit}. However, to the best of our knowledge, this emerging literature on TD models has not yet provided a formal analysis of identification issues in presence of spillover effects. Instead, methodological works on spillovers have largely developed within the DiD framework \citep[e.g.,][]{butts2021difference}. 

In this spirit, our paper formalizes the identifying assumption of the TD model in the case of spillover effects through a revisited potential outcome framework in line with \citet{huber2021framework}. Under the canonical regression specification of TD models, the average treatment effect on the treated (ATT) and the spillover effect are supposed to be captured by the triple-interaction coefficient, the TD estimand, and a lower-order interaction coefficient, respectively. We show that, once spillovers affect control units, these coefficients do neither identify the ATT, nor the spillover effect under conventional assumptions. This implies that TD models may yield biased and inconsistent estimates, potentially leading to incorrect substantive conclusions. 

To address this problem, we propose an alternative specification, the \textit{Double-Triple Difference} (DTD) model. The DTD model is designed to jointly identify the ATT and spillover effects, without imposing overly restrictive assumptions, such as multiple parallel trends. In doing so, we consider two specifications: (i) one in which the parallel trend-in-trends assumption holds unconditionally, and (ii) one in which it holds conditionally on observed covariates. We present the identifying assumptions and corresponding estimators for both specifications. The latter one extends the doubly-robust TD approach of \citet{ortiz2025better} to the DTD setting. 

We explore the implications of ignoring spillovers through an empirical application and two Monte Carlo simulations. The application studies the introduction of a Special Economic Zone (SEZ) in the port areas of Campania (Italy) in 2018. The SEZ provides tax credits for hiring and capital investment to firms located in eligible municipalities and operating in targeted, export-oriented sectors (e.g., automotive and food industry), with the aim of fostering exports, employment, and innovation. Spillovers to non-targeted sectors are plausible because the policy may affect upstream and complementary activities linked to seaborne exports (e.g., transport services and shipbuilding). Using firm-level panel data, we estimate the effect of the SEZ on employment exploiting two partitions: (i) sectoral, targeted versus non-targeted sectors; and (ii) geographical, firms in the Campania SEZ versus firms located in a similar area of Sicily not covered by a SEZ. Our findings indicate that the SEZ increased employment among treated firms by about 6.8\% on average and generated positive spillovers toward non-eligible sectors (3.8\%). We also show that the conventional TD specification underestimate the ATT when positive spillovers are present, as spillover-induced changes in a control group partially offset the true effect. 

In our first Monte Carlo simulation we adopt the same setting as in our empirical application, featuring panel data with multiple pre- and post-treatment periods under the standard unconditional parallel trend-in-trend assumption. Our second simulation considers the case in which the parallel trend-in-trend assumption holds only conditionally on observed covariates. This second simulation is based on a two-period design, drawing on the doubly-robust triple-difference framework proposed by \citet{ortiz2025better}. In both simulations, we compare the performance of the TD estimator with our proposed extension to the DTD framework. 

The paper is structured as follows. Section \ref{sec:empirical_lit} reviews the empirical literature on TD models, with a specific focus on studies that consider spillover effects, and motivates our analysis. Section \ref{sec::meth} presents the methodological framework and  outlines the standard TD model. Our proposed extension, the DTD model, is presented in Section \ref{sec::doubletriple}. Section \ref{sec::sim} reports the results of the two Monte Carlo simulation studies. Section \ref{sec:Application} applies our framework to evaluate the Campania SEZ policy. Finally, Section \ref{Sec:Conclusions} discusses the main findings and their implications for policy evaluation and methodological research.

\section{Triple-difference models and spillover effects in the literature}
\label{sec:empirical_lit}

Over the last few years, several theoretical contributions have clarified the conditions for identification, estimation, and inference in TD models across a range of settings, including multiple-period designs \citep{strezhnev2024group, akbari2025semiparametric, ortiz2025better}, staggered adoption \citep{strezhnev2023decomposing, ortiz2025better}, frameworks in which observed covariates play a central role in the credibility of identifying assumptions \citep{leventer2025conditional, ortiz2025better}, and settings with heterogeneous treatment effects \citep{akbari2024non, caron2025triple}. Empirically, heterogeneous treatment effects are also a common motivation for TD designs \citep{Xiong2025, Barwick2025, Barkowski2025, Premkumar2025, Paredes2025}. In addition, TD specifications are frequently employed as robustness checks following standard DiD and event-study analyses \citep[see e.g.][]{Reddig2024, Lipton2021}. By adding an additional control dimension relative to a double-difference specification, the intent is often to capture further unobserved trend components and to assess whether the implied ATT is robust to a richer set of control contrasts.\footnote{See \citet{olden2022triple} for a recent review of TD applications.}
\par Despite this large and growing literature, to the best of our knowledge no studies explicitly discuss the implications of TD designs when spillover effects are plausible. Within the DiD framework, only a few studies have examined this issue \citep{butts2021difference,huber2021framework, hettinger2025doubly, lee2025policy}, and, more generally, research on the topic remains limited in the causal panel data literature \citep{arkhangelsky2024causal}. To gauge how applied work treats this issue, we reviewed empirical studies across different fields that employ TD designs and in which spillovers are either explicitly mentioned or plausibly relevant. Three recurring practices emerge.

\paragraph{Practice 1: No-spillover as a maintained identifying assumption.}
A first set of studies acknowledges that TD identification hinges on the additional comparison dimension being unaffected by the policy, and therefore treats “no spillovers” as a maintained assumption backed mainly by institutional context. In \citet{McIntosh2008}, the application concerns whether microfinance innovations cushion households against localized shocks (e.g., disease outbreaks or political unrest), using outcomes such as borrowing, repayment, and savings. Microfinance products are introduced in some regions but not others. Within treated regions, adoption depends on surpassing a take-up threshold in a public election. The TD design then adds an additional dimension by comparing those who are willing to adopt (choosers) to those who are not (non-choosers), with chooser status in control regions proxied via a mock election that mimics the adoption process. The key identifying claim is that, absent the innovation, choosers and non-choosers would not affect one another. This rules out, for instance, informal risk-sharing or credit-market interactions that could transmit benefits from adopters to non-adopters and contaminate the comparison. Similarly, \citet{Lipton2021} studies whether expanding Medicaid coverage of preventive dental care for adults affects children’s dental care: a baseline DiD contrasts children of Medicaid-covered parents in states with versus without adult dental benefits, and a TD robustness check adds children of parents not covered by Medicaid as an extra comparison group. Interpreting the TD estimate requires that the expansion not spills over onto non-covered low-income families (e.g., via changes in provider participation, clinic capacity, or information), an assumption the paper states but does not directly test.

\paragraph{Practice 2: Spillovers acknowledged qualitatively.}
Other studies acknowledge that spillovers may be present and discuss their likely direction, using qualitative reasoning to sign the resulting bias in the ATT. For example, \citet{Bratberg2015} studies the impact of a financial shock on sickness absence among public-sector workers, exploiting a fiscal crisis affecting the public sector in some Norwegian municipalities in 2007. Their TD design uses municipalities not exposed to the shock as one placebo dimension and private-sector workers as another comparison group. The authors discuss a potential ``fear effect'' that could increase sickness absence even in unaffected municipalities, as well as a potential ``contagion effect'' affecting private-sector workers. On this basis, they argue that the ATT estimate is likely biased downward and thus conservative. No formal checks of the identifying assumptions are provided.

\paragraph{Practice 3: Lower-order interactions interpreted as capturing spillovers.}
A third set of studies explicitly discusses interference and estimates TD specifications with the goal of recovering both an ATT and a spillover component, typically by reading the relevant double interaction(s) as effects on comparison units. In the canonical application, \citet{Gruber1992} studies U.S. state mandates requiring health insurance coverage for maternity care. The TD design contrasts adopting versus non-adopting states and uses demographic groups with little direct exposure to maternity benefits (e.g., older women and men) as an additional comparison dimension. The evidence points to substantial cost shifting onto the wages of women of childbearing age and to limited changes in total labor input. In this setting, any post-mandate changes for the demographic comparison groups in adopting states are naturally interpreted as potential spillovers (e.g., broader wage-setting responses by employers). Several later applications adopt the same ATT and spillover reading in settings where spillovers are inherently plausible. \citet{Bellou2013} examines U.S. states’ adoption of vertically oriented driver’s licenses for under-21s, intended to ease age verification and curb underage smoking/drinking; their TD uses older teens in the same state (who typically keep horizontal IDs) as a within-state comparison group, so spillovers correspond to any policy-induced changes among older cohorts (peer effects, retailer behavior, or substitution in access). \citet{Zhang2016} studies Beijing’s urban-village removals and estimates a TD hedonic model using housing transactions: homes close to villages versus farther away, near removed versus non-removed villages, and before versus after removal, where the key double interaction is directly interpreted as the spatial ``spillover'' of removal on nearby formal-house prices. \citet{Lee2022} analyzes an expansion of dental coverage for the elderly in South Korea alongside contemporaneous improvements in chronic-disease coverage, using TD to isolate the incremental effect of dental coverage for those affected by the chronic-disease expansion; the corresponding lower-order interaction is discussed as capturing utilization responses that arise even absent the dental expansion. Finally, \citet{Bokhari2024} evaluates minimum alcohol unit pricing introduced in Wales in 2020 using product-level price and purchase data. They explicitly allow for substitution toward untreated products, more expensive alcoholic drinks not affected by the policy. The lower-order interaction on control products is treated as a spillover component, and the ATT is obtained by appropriately combining the triple and lower-order terms under multiple parallel trend assumptions.

The range of practices observed in the empirical literature points to a lack of consensus among practitioners. We therefore lay out transparent guidelines and propose an approach to properly identify the ATT under interference.

\section{Methodological framework}
\label{sec::meth}

We consider a population divided into two strata, one exposed to the policy, the \emph{treatment stratum} $\mathcal{S}_1$, and one not exposed, the \emph{placebo stratum} $\mathcal{S}_0$. For example, consider two countries: Country 1 and Country 0. A new anti-poverty policy is implemented in Country 1 (\( \mathcal{S}_1 \)), whereas Country 0 does not implement it (\( \mathcal{S}_0 \)). Within each stratum, units are categorized into two groups: individuals in \( \mathcal{T} \) are the policy \textit{target group}, whereas individuals in \( \mathcal{I} \) are not. To account for the stratum dimension, we append a subscript \(0\) or \(1\) to the group notation, indicating membership in stratum \(\mathcal{S}_0\) or \(\mathcal{S}_1\), respectively. The resulting partition is therefore \( \mathcal{S}_0 = \mathcal{T}_0 \cup \mathcal{I}_0 \) and \( \mathcal{S}_1 = \mathcal{T}_1 \cup \mathcal{I}_1 \). In our example, only specific individuals benefit from the anti-poverty policy, namely those in the lower income quantiles: these individuals are included in \( \mathcal{T}_1 \) in Country 1, and in \( \mathcal{T}_0 \) if they reside in Country 0, which does not implement such a policy.
Units are assumed to remain in the same group over time.
In this framework, we are interested in a policy outcome variable, $Y \in \mathbb{R}$, 
which is observed for the same population at two distinct time points, denoted by the index \( t \in \{0,1\} \). Specifically, $Y_{igst}$ is defined as the outcome for unit $i$, in group $g$ and stratum $s$, at time $t$. Let us also consider two dummy variables, $S$ and $G$, that take the value 1 if a unit belongs to the treatment stratum or target group, respectively, and 0 otherwise.
A binary treatment or policy is applied exclusively to units in the treatment stratum ($S=1$) belonging to the target group ($G=1$) during period $t=1$: the individual treatment variable is denoted with \( D \in \{0,1\} \). 

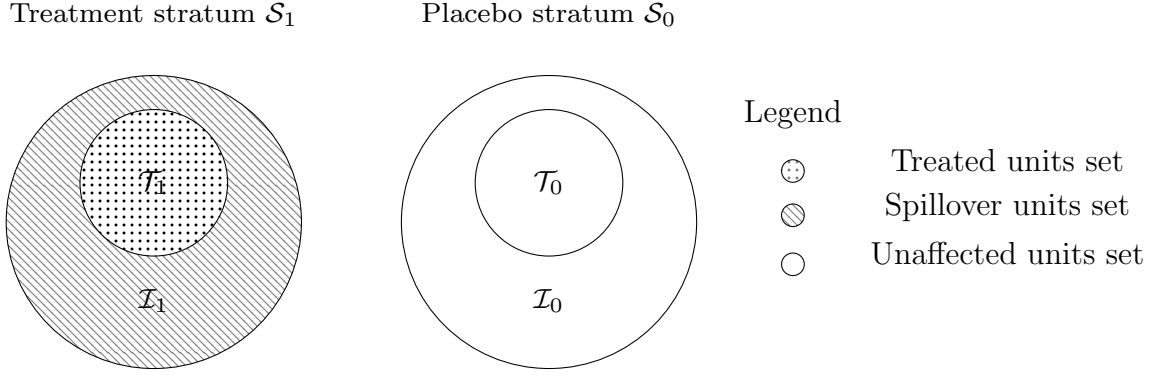
\begin{figure}
	\begin{tikzpicture}[scale=1.3, transform shape]
		\node[set,text width=3cm,pattern=north west lines, pattern color=gray] (leftbg) at (0,-0.4) {};
		\node[set,preaction={fill=white},pattern=dots,text width=1.5cm] (left) at (0,0) {};
		\node[above=0.7cm of left] {\scriptsize Treatment stratum $\mathcal{S}_1$};
		\node at (0,0) {\scriptsize$\mathcal{T}_1$}; 
		\node at (0,-1.2) {\scriptsize$\mathcal{I}_1$}; 
		
		\node[set,text width=3cm] (rightbg) at (4,-0.4) {};
		\node[set,text width=1.5cm] (right) at (4,0) {};
		\node[above=0.7cm of right] {\scriptsize Placebo stratum $\mathcal{S}_0$};
		\node at (4,0) {\scriptsize$\mathcal{T}_0$}; 
		\node at (4,-1.2) {\scriptsize$\mathcal{I}_0$}; 
		
		\matrix[draw=white, matrix of nodes, column sep=1mm] (m) at (8,0) {
			Legend \\
			\node [set,text width=3mm, pattern=dots, pattern color=gray, draw=black]{}; & Treated units set\\
			\node [set,text width=3mm, pattern=north west lines, pattern color=gray, draw=black]{}; & Spillover units set\\
			\node [set,text width=3mm, draw=black]{}; & Unaffected units set\\
		};
	\end{tikzpicture}
	\caption{Two strata, each comprising two groups. Treated units belong to group $\mathcal{T}_1$, and units affected by spillovers belong to group $\mathcal{I}_1$. Units in stratum $\mathcal{S}_0$ are completely unaffected, both directly and indirectly, by the policy.}
	\label{stratum_sutva}
\end{figure}
To account for spillover effects, we employ the potential outcome framework as has been utilized in studies related to spillover effects by, among others, \citet{forastiere2016identification} and \citet{huber2021framework}.
We define a spillover effect as any interference where the outcome of a unit is affected by the treatment status of another unit, thus not  depending exclusively on its own treatment status \citep{Hudgens2008}. Spillover effects violate the classical Stable Unit Treatment Value Assumption (SUTVA) \citep{frohlich2019impact}. 

For notational convenience, we express the potential outcome by distinguishing between the individual treatment status of a unit and its potential exposure to spillover.  Specifically, we denote the potential outcome of individual \( i \) in group \( g \) and stratum \( s \) at time $t$, when the individual treatment is assigned as \( D \in \{0,1\} \) and the spillover status is set to \( \tilde{D} \in \{0,1\} \), taking the value 1 if unit is exposed to spillovers, and 0 otherwise as:
\[
Y_{igst}(D_{igst} = d, \tilde{D}_{igst} = \tilde{d}).
\]
Therefore, we may define the following four potential outcomes:
\begin{enumerate}
	\item $Y_{igst}(1,1)$ represents the outcome when unit $i$ is treated and exposed to spillovers.
	\item $Y_{igst}(1,0)$ represents the outcome when unit $i$ is treated but not exposed to spillovers.
	\item $Y_{igst}(0,1)$ represents the outcome when unit $i$ is untreated but exposed to spillovers.
	\item $Y_{igst}(0,0)$ represents the outcome when unit $i$ is untreated and not exposed to spillovers.
\end{enumerate}

In line with \citet{huber2021framework}, we consider spillover effects to exist only within strata and not between strata, as summarized by Figure \ref{stratum_sutva}. Considering our initial example, spillovers may affect untreated units in higher-income quantiles within the treated country, while no spillovers occur in the placebo country.

\begin{assumption}[Stratum SUTVA]
\label{ass::sutva}
	The assumption of \textit{stratum SUTVA} implies that the potential outcome of a unit \( i \) in stratum \( s \) may be influenced by its treatment status and by the treatment status of units in the same stratum $s$ in which it is nested. Thus, SUTVA holds only across strata.
\end{assumption}

We also assume that there are no anticipatory effects for the four types of potential outcomes, which is a standard assumption ruling out any average effects of $D$ or $\tilde{D}$ on the outcome in the baseline period. Such effects could arise, for instance, if some units modified their behavior in the baseline period in anticipation of the upcoming individual treatment or exposure to spillovers. For the sake of simplicity, the notation $Y_{igst}$ is henceforth reduced to $Y_{it}$.

\begin{assumption}[No average anticipation effect]
\label{ass::anticipation}
	The following condition holds:  \begin{equation*}
		\mathbb{E}[Y_{i0}(d,\tilde{d}) - Y_{i0}(0, 0) \mid  S=s, G=g] = 0, \quad \forall s, g,
	\end{equation*}
	for all $d$, $\tilde{d} \in \lbrace 0,1\rbrace$.
\end{assumption}

In the presence of spillover effects, multiple definitions of the ATT arise. We begin by introducing the ATT in the absence of spillovers. For a generic stratum \( s \) and group \( g \), we define
\begin{equation*}
	ATT(g, s) = \mathbb{E}\!\left[ Y_{i1}(1,0) - Y_{i1}(0,0) \mid G=g, S=s \right].
\end{equation*}
This estimand captures the causal effect of treatment on a treated unit which is not affected by any spillover effect from other units. Alternatively, one may define an ATT that allows for spillover effects affecting treated units:
\begin{equation*}
	ATT^{\star}(g, s) = \mathbb{E}\!\left[ Y_{i1}(1,1) - Y_{i1}(0,0) \mid G=g, S=s \right].
\end{equation*}
The difference \( ATT^{\star}(g,s) - ATT(g,s) \) can be interpreted as the average spillover effect on treated units. Finally, it is useful to introduce the average spillover effect on the untreated (ASU), defined as
\begin{equation*}
	ASU(g, s) = \mathbb{E}\!\left[ Y_{i1}(0,1) - Y_{i1}(0,0) \mid G=g, S=s \right].
\end{equation*}
This estimand measures the impact of exposure to treated peers on units that do not themselves receive treatment. All these effects are local in nature, as they are defined conditionally on \( G \) and \( S \). More aggregate estimands can be obtained by marginalizing over \( G \), thereby recovering total effects within treated strata, as discussed in \citet{huber2021framework}.

For identification purposes, and in line with \citet{fiorini2024simple} and \citet{lee2025policy}, we impose the following assumption.
\begin{assumption}[No spillover effects on treated units]
	\label{ass::nospillover}
    The following condition holds:
	    \begin{equation*}
    	\mathbb{E}\!\left[ Y_{i1}(1,1) - Y_{i1}(1,0) \mid G=g, S=s \right] = 0,
		\quad \forall\, g, s.
	\end{equation*}	
\end{assumption}
\noindent
This assumption requires that once a unit receives treatment, its outcome is no longer influenced by spillover effects from other treated units. Under this assumption, spillovers operate only on untreated units, implying that \( ATT(g,s) = ATT^{\star}(g,s) \) for all \( g \) and \( s \). Henceforth, our quantities of interest are the ATT and the ASU, for which we provide identifying assumptions and estimation strategies.

\subsection{The triple-difference model}
\label{subsec::triple}

Consider the two-period TD specification in \citet{olden2022triple}:
\begin{align}
	\label{mod::3diff}
	Y_{it} \;=\;& \beta_0 + \beta_1 S_i + \beta_2 G_i + \beta_3 T_{t}
	+ \beta_4 (S_i \times G_i) + \beta_5 (G_i \times T_{t}) + \psi (S_i \times T_{t})+
	\nonumber\\
	&\quad + \delta (S_i \times G_i \times T_{t}) + \epsilon_{it}, \quad \forall i,t,
\end{align}
\noindent
where $T_{t}$ is a binary indicator equal to 1 if observation $(i,t)$ is in the post-policy period and 0 otherwise.\footnote{Equation \eqref{mod::3diff} can be equivalently written as a saturated three-way fixed effects regression, as discussed in \citet{ortiz2025better}.} The remaining notation follows our previous definitions.
The triple-interaction coefficient $\delta$ can be written as the difference of two DiD:
\begin{align*}
	\delta
	=\;& \Big(\mathbb{E}[Y_{i1}-Y_{i0}\mid S=1,G=1]-\mathbb{E}[Y_{i1}-Y_{i0}\mid S=1,G=0]\Big) \nonumber\\
	&-\Big(\mathbb{E}[Y_{i1}-Y_{i0}\mid S=0,G=1]-\mathbb{E}[Y_{i1}-Y_{i0}\mid S=0,G=0]\Big).
\end{align*}

The standard identifying assumption for $\delta$ in TD designs is the so-called \emph{parallel trend-in-trends} assumption \citep{Egami2023}\footnote{The same assumption is named ``parallel growth'' or ``common acceleration'' in \citet{frohlich2019impact}}. Compared with the DiD parallel trends assumption, it allows for non-parallel trends between treated and untreated groups within each stratum, but requires that deviations from parallel trends be the same across strata, as formalized in the following.

\begin{assumption}[Unconditional parallel trend-in-trends]
\label{ass::uptint}
	\begin{align*}
		\Delta^{TT}\;&\equiv\;\Big(\mathbb{E}\!\left[Y_{i1}(0,0)-Y_{i0}(0,0)\mid S=1,G=1\right]
		-\mathbb{E}\!\left[Y_{i1}(0,0)-Y_{i0}(0,0)\mid S=1,G=0\right]\Big)+ \\
		\qquad &-
		\Big(
		\mathbb{E}\!\left[Y_{i1}(0,0)-Y_{i0}(0,0)\mid S=0,G=1\right]
		-\mathbb{E}\!\left[Y_{i1}(0,0)-Y_{i0}(0,0)\mid S=0,G=0\right]
		\Big)
		=0.
	\end{align*}
	\vspace{-10mm}
\end{assumption}
\noindent
In this context, we derive formal identification expression for the triple-interaction parameter $\delta$ in presence of spillovers. Under stratum SUTVA (Assumption \ref{ass::sutva}), identification hinges on how outcomes of untreated units respond to spillover exposure, as stated in the following proposition. 

\begin{proposition}	\label{theo::triplediff_impact}
	Under Assumptions \ref{ass::sutva}, \ref{ass::anticipation}, \ref{ass::nospillover}, and \ref{ass::uptint}, the TD parameter $\delta$ satisfies
	\[
	\delta \;=\; ATT(S=1,G=1)\;-\;ASU(S=1,G=0).
	\]	Thus, it identifies the ATT if and only if the ASU is equal to zero.
\end{proposition}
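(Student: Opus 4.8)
The plan is to start from the representation of $\delta$ as the difference of two within-stratum DiDs already displayed above, and to translate each conditional expectation of observed first differences into potential-outcome language cell by cell. First I would fix the observed-to-potential mapping implied by Figure \ref{stratum_sutva}: in the post-period the treated cell $(S=1,G=1)$ realizes $Y_{i1}(1,1)$, the spillover cell $(S=1,G=0)$ realizes $Y_{i1}(0,1)$, and both placebo cells $(S=0,G=1)$ and $(S=0,G=0)$ realize $Y_{i1}(0,0)$ under stratum SUTVA (Assumption \ref{ass::sutva}); in the baseline period Assumption \ref{ass::anticipation} lets me replace every observed outcome by $Y_{i0}(0,0)$ in expectation. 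Substituting these four mappings into the two DiD contrasts rewrites $\delta$ entirely in terms of potential outcomes.

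Next I would isolate the causal quantities by adding and subtracting the common no-treatment counterfactual $Y_{i1}(0,0)$ within each treatment-stratum cell. For the treated cell the post-minus-pre change becomes $\mathbb{E}[Y_{i1}(1,1)-Y_{i1}(0,0)\mid S=1,G=1]$ plus a pure $Y(0,0)$ trend term; the first piece is $ATT^{\star}(S=1,G=1)$, which collapses to $ATT(S=1,G=1)$ by Assumption \ref{ass::nospillover}. The same manipulation applied to the spillover cell produces $ASU(S=1,G=0)$ plus its own trend term. Because the $G=0$ cell enters the treatment-stratum DiD with a negative sign, the spillover term survives as $-ASU(S=1,G=0)$, which is precisely where the sign in the statement originates.

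The residual is then four expressions of the form $\mathbb{E}[Y_{i1}(0,0)-Y_{i0}(0,0)\mid S=s,G=g]$ combined with exactly the signs that define $\Delta^{TT}$ in Assumption \ref{ass::uptint}; invoking unconditional parallel trend-in-trends sets this residual to zero and leaves $\delta = ATT(S=1,G=1)-ASU(S=1,G=0)$. The ``if and only if'' claim is then immediate, since the right-hand side reduces to $ATT(S=1,G=1)$ exactly when $ASU(S=1,G=0)=0$. I expect the only delicate point to be the bookkeeping: ensuring each cell carries the correct potential outcome (in particular that the untreated-but-exposed cell is assigned $Y_{i1}(0,1)$ rather than $Y_{i1}(0,0)$) and tracking signs so that the spillover contribution remains $-ASU$ while all the $Y(0,0)$ trend components regroup exactly into $\Delta^{TT}$. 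No individual step is analytically hard; the entire content lies in the correct potential-outcome accounting enforced by Assumptions \ref{ass::sutva}--\ref{ass::nospillover}.
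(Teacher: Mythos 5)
Your proposal is correct and follows essentially the same route as the paper's proof: map each cell's observed differences to potential outcomes via Assumptions \ref{ass::sutva} and \ref{ass::anticipation}, add and subtract $Y_{i1}(0,0)$ in the treatment-stratum cells to extract $ATT(S=1,G=1)$ and $-ASU(S=1,G=0)$, and note that the leftover trend terms are exactly $\Delta^{TT}$, which vanishes under Assumption \ref{ass::uptint}. The only cosmetic difference is that you assign the treated cell $Y_{i1}(1,1)$ and invoke Assumption \ref{ass::nospillover} to collapse $ATT^{\star}$ to $ATT$ afterwards, whereas the paper applies that assumption at the mapping stage and writes $Y_{i1}(1,0)$ directly.
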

\begin{proof}
	In the Supplementary Material.
\end{proof}

\noindent
Proposition \ref{theo::triplediff_impact} implies that, in presence of spillovers, the parameter $\delta$ in \eqref{mod::3diff} generally fails to identify the ATT and may result in biased estimates. In particular, when spillovers are positive for untreated units (a diffusion effect), $\delta$ understates the direct treatment effect. Conversely, when spillovers are negative (a displacement effect), $\delta$ overstates the ATT.

From a different perspective, spillovers can be viewed as an additional causal effect distinct from the direct treatment effect, with potentially different magnitude and sign. In this sense, our result relates to the empirical literature employing TD models to study heterogeneity across subgroups \citep{caron2025triple}, under different identifying assumptions.

We next consider identification of the spillover parameter. The coefficient $\psi$ in \eqref{mod::3diff} is sometimes interpreted as capturing the effect of the treatment on untreated units in the treated stratum after policy implementation \citep{olden2022triple}. This interpretation, however, requires stronger conditions than those needed to interpret $\delta$: identification of $\psi$ relies on a \emph{standard} DiD-type parallel trends assumption comparing untreated groups across strata. Specifically:
\begin{assumption}[Parallel trends for untreated groups]
	\label{ass::parallel_trend}
	\begin{align*}
		\Delta^{T}_{G=0}\;\equiv\;&
		\mathbb{E}\!\left[Y_{i1}(0,0)-Y_{i0}(0,0)\mid S=1,G=0\right]
		-
		\mathbb{E}\!\left[Y_{i1}(0,0)-Y_{i0}(0,0)\mid S=0,G=0\right]
		=0.
	\end{align*}
	\vspace{-10mm}
\end{assumption}
\noindent
The corresponding identification result for $\psi$ is stated in the following proposition.
\begin{proposition}
	\label{theo::triplediff_spillover}
	Under Assumptions \ref{ass::sutva}, \ref{ass::anticipation}, and \ref{ass::parallel_trend}, the spillover coefficient $\psi$ identifies the ASU:
	\[
	\psi \;=\; ASU(S=1,G=0).
	\]
\end{proposition}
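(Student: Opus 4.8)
The plan is to first reduce the regression coefficient $\psi$ to a difference-in-differences contrast of observed conditional means, and then translate those means into potential outcomes. Since model \eqref{mod::3diff} is saturated in the three binary indicators, each cell mean $\mu_{sgt}\equiv\mathbb{E}[Y_{it}\mid S=s,G=g,T=t]$ is an exact linear combination of the coefficients. Restricting attention to the two untreated groups ($G=0$) and solving the resulting subsystem, I would obtain
\[
\psi=\big(\mu_{101}-\mu_{100}\big)-\big(\mu_{001}-\mu_{000}\big),
\]
i.e.\ the DiD between the treatment stratum and the placebo stratum computed on the non-target group.

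Second, I would assign a potential outcome to each of these four cells. Units with $G=0$ never receive the individual treatment, so $D=0$ throughout. By Stratum SUTVA (Assumption \ref{ass::sutva}) the placebo stratum is free of spillovers, giving $\mu_{001}=\mathbb{E}[Y_{i1}(0,0)\mid S=0,G=0]$, whereas in the treatment stratum the post-period cell is exposed, so $\mu_{101}=\mathbb{E}[Y_{i1}(0,1)\mid S=1,G=0]$. For the two baseline cells, No average anticipation (Assumption \ref{ass::anticipation}) lets me replace the observed baseline means by $\mathbb{E}[Y_{i0}(0,0)\mid\cdot]$ irrespective of future exposure. Substituting these four expressions expresses $\psi$ as a contrast of potential-outcome means.

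Third, I would isolate the target estimand by adding and subtracting the counterfactual $\mathbb{E}[Y_{i1}(0,0)\mid S=1,G=0]$ inside the treatment-stratum difference. This splits $\psi$ into $ASU(S=1,G=0)$ plus the term
\[
\mathbb{E}\!\left[Y_{i1}(0,0)-Y_{i0}(0,0)\mid S=1,G=0\right]-\mathbb{E}\!\left[Y_{i1}(0,0)-Y_{i0}(0,0)\mid S=0,G=0\right],
\]
which is exactly $\Delta^{T}_{G=0}$ from Assumption \ref{ass::parallel_trend} and therefore vanishes, delivering $\psi=ASU(S=1,G=0)$.

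The saturated-regression bookkeeping is routine; the step carrying the real content is the cell-by-cell mapping to potential outcomes, where one must invoke Stratum SUTVA to guarantee that the placebo stratum is uncontaminated and No anticipation to pin down both baselines, so that the only surviving spillover term is the post-period exposure $Y_{i1}(0,1)$ in the treated stratum. I would also note that Assumption \ref{ass::nospillover} plays no role here, since the non-target group is untreated by construction, consistent with its absence from the stated hypotheses.
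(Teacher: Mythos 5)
Your proposal is correct and follows essentially the same route as the paper's proof: write $\psi$ as the difference-in-differences on the $G=0$ cells, map each cell to potential outcomes via Stratum SUTVA (Assumption \ref{ass::sutva}) and No anticipation (Assumption \ref{ass::anticipation}), then add and subtract $\mathbb{E}[Y_{i1}(0,0)\mid S=1,G=0]$ so that $\psi = ASU(S=1,G=0) + \Delta^{T}_{G=0}$, with the latter vanishing under Assumption \ref{ass::parallel_trend}. Your explicit saturated-regression bookkeeping and the remark that Assumption \ref{ass::nospillover} is not needed are accurate but only make explicit what the paper leaves implicit.
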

\begin{proof}
	In the Supplementary Material.
\end{proof}

\noindent
Propositions \ref{theo::triplediff_impact} and \ref{theo::triplediff_spillover} are the first main identification result of our paper. The standard TD model does not, in general, identify the ATT in the presence of spillover effects, leading to biased estimates. Moreover, the spillover coefficient $\psi$ is identifiable only under a different and more stringent condition than parallel trend-in-trends, namely the classical DiD parallel trends assumption for untreated groups across strata. Hence, while TD designs may be more robust than standard DiD when the parallel trends assumptions fail, spillovers threaten identification of both the treatment and spillover effects.

We then expand the analysis to settings in which spillovers extend to the placebo stratum, thereby relaxing stratum SUTVA (Assumption \ref{ass::sutva}) and generating multiple, simultaneous spillovers across groups. More specifically, we consider two types of spillovers: one affecting those in the interference group in the treated stratum (as allowed by Assumption \ref{ass::sutva}), and a second spillover affecting units within a specific group of the placebo stratum. This yields two cases: Case A, where the spillover in the placebo stratum affects the target group, and Case B, where it affects units in the interference group. These cases are illustrated in Figures \textcolor{black}{S.1} and \textcolor{black}{S.2} of the Supplementary Material, and formal identification results are provided in \textcolor{black}{Section S.3}. We show that, in both cases, the TD model does not identify the ATT unless all spillover effects are zero. The identification problem disappears only in knife-edge configurations: in Case A, when the two spillovers exactly offset each other with equal and opposite effects, or in Case B, when they exactly reinforce each other with equal effects in the same direction.

Therefore, under spillover effects, ensuring unbiased ATT and ASU estimates raises complex identification issues. Suppose that practitioners can define
a placebo stratum that is unaffected, either directly or indirectly, by the policy, but cannot rule out spillovers in the treatment stratum. This is the baseline scenario  depicted in Figure \ref{stratum_sutva}, which is in line with Assumption \ref{ass::sutva}.
In this context, one potential strategy implies the estimation of the ATT and  ASU by means of two separate DiD parameters. The treatment effect can be identified by comparing group $\mathcal{T}_1$ (treated) with group $\mathcal{T}_0$ (control). Conversely, the spillover effect can be identified by comparing group $\mathcal{I}_1$ (exposed to spillovers) with group $\mathcal{I}_0$ (unexposed). However, this approach relies critically on the strong requirement that the parallel trends assumption condition holds in both comparisons. We therefore aim to relax such requirement and propose an estimation approach within a TD framework. The following section impose relevant assumptions and the formalization of our proposal.

\section{The double-triple difference model}
\label{sec::doubletriple}

Our goal is to propose an approach for estimating both direct treatment and spillover effects in the presence of spillovers, without relying on overly restrictive assumptions. To this end, we remain within a TD framework, which allows us to replace the classical DiD parallel trends condition with the weaker (and typically more plausible) parallel trend-in-trends assumption, while addressing identification issues discussed in Section \ref{sec::meth}.

\par Suppose we refine the partition within each stratum into three groups, so that
$\mathcal{S}_1=\mathcal{T}_1\cup\mathcal{I}_1\cup\mathcal{C}_1$ and
$\mathcal{S}_0=\mathcal{T}_0\cup\mathcal{I}_0\cup\mathcal{C}_0$.
Here, $\mathcal{T}$ denotes the policy target group; $\mathcal{I}$ denotes units that are not directly treated but may be exposed to spillovers (\textit{interference set}); and $\mathcal{C}$ denotes units that are neither directly treated nor plausibly affected by spillovers (\textit{pure control set}). By construction, the policy directly affects only units in $\mathcal{T}_1$, whereas -- under Assumption \ref{ass::sutva} -- spillovers affect only units in $\mathcal{I}_1$. This setup is illustrated in Figure \ref{fig::23diffscenario}. Let $G_i$ and $I_i$ be dummy variables equal to 1 if unit $i$ belongs to groups $\mathcal{T}$ and $\mathcal{I}$, respectively, and 0 otherwise. The full-control group $\mathcal{C}$ serves as the baseline and is characterized by $G_i=0$ and $I_i=0$.

We also define local versions of the ATT and ASU conditional on the newly introduced indicator variable $I$:
\begin{align*}
	ATT(g,s,j)
	&\equiv
	\mathbb{E}\!\left[ Y_{i1}(1,0)-Y_{i1}(0,0)\mid G=g,\;S=s,\;I=j\right],\\
	ASU(g,s,j)
	&\equiv
	\mathbb{E}\!\left[ Y_{i1}(0,1)-Y_{i1}(0,0)\mid G=g,\;S=s,\;I=j\right].
\end{align*}
\noindent
First, we impose an additional and crucial requirement: the existence (and observation) of units in the treated stratum that are untreated and unexposed to spillovers, in line with \citet{fiorini2024simple} and \citet{lee2025policy}.

\begin{assumption}[Observation of untreated units without spillovers]
	\label{ass::observationofI}
	The population sets $\mathcal{C}_1$ and $\mathcal{C}_0$ are non-empty and partially observed.
\end{assumption}

\noindent In this context, we define the \emph{double-triple difference} (DTD) model as
\begin{align}
	\label{mod::23diff}
	Y_{it}=\;&
	\beta_0 + \beta_1 S_i + \beta_2 T_{t} + \beta_3 G_i + \beta_4 I_i
	+ \beta_5 (S_i\times G_i) + \beta_6 (S_i\times T_{t}) + \beta_7 (G_i\times T_{t}) +
	\nonumber\\
	&\quad + \beta_8 (S_i\times I_i) + \beta_9 (I_i\times T_{t})
	+ \delta (S_i\times G_i\times T_{t})
	+ \psi (S_i\times I_i\times T_{t})
	+ \epsilon_{it}, \quad \forall i,t.
\end{align}
This specification embeds two TD parameters within a unified framework: the coefficient $\delta$ targets the ATT, while $\psi$ targets the spillover effect for the interference group.

\par The DTD coefficient $\delta$ can be written as
\begin{align}
	\label{eq:dtd_delta_def}
	\delta
	=\;&
	\Big(\mathbb{E}[Y_{i1}-Y_{i0}\mid S=1,G=1,I=0]-\mathbb{E}[Y_{i1}-Y_{i0}\mid S=1,G=0,I=0]\Big)+
	\nonumber\\
	&-
	\Big(\mathbb{E}[Y_{i1}-Y_{i0}\mid S=0,G=1,I=0]-\mathbb{E}[Y_{i1}-Y_{i0}\mid S=0,G=0,I=0]\Big).
\end{align}
\noindent
For $\delta$ to identify the ATT, we require a parallel trend-in-trends assumption for the comparison conditioning on $I=0$ (i.e., among units in the target and pure control groups).

\begin{assumption}[Unconditional parallel trend-in-trends between target and pure control groups]
	\label{ass:tinti0}
    \begin{align*}    
	\Delta^{TT}_{I=0}\;\equiv\;&\Big(\mathbb{E}[Y_{i1}(0,0)-Y_{i0}(0,0)\mid S=1,G=1,I=0]+\\
	&-\mathbb{E}[Y_{i1}(0,0)-Y_{i0}(0,0)\mid S=1,G=0,I=0]\Big)+\nonumber\\
	&-\Big(\mathbb{E}[Y_{i1}(0,0)-Y_{i0}(0,0)\mid S=0,G=1,I=0]+\\
	&-\mathbb{E}[Y_{i1}(0,0)-Y_{i0}(0,0)\mid S=0,G=0,I=0]\Big)=0.
	   \end{align*}
	\vspace{-10mm}
\end{assumption}
\noindent We can therefore state the identification result for $\delta$ as follows.

\begin{proposition}
	\label{prop:dtd_delta}
	Under Assumptions \ref{ass::sutva}, \ref{ass::anticipation}, \ref{ass::nospillover}, \ref{ass::observationofI}, and \ref{ass:tinti0}, the DTD parameter $\delta$ identifies
	\[
	\delta = ATT(S=1,G=1,I=0).
	\]
\end{proposition}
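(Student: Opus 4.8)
The plan is to rewrite $\delta$ in \eqref{eq:dtd_delta_def} entirely in terms of potential outcomes, isolate the term that matches $ATT(S=1,G=1,I=0)$, and cancel everything else through Assumption \ref{ass:tinti0}. The first step is the observed-to-potential outcome mapping on the four cells entering \eqref{eq:dtd_delta_def}, noting that mutual exclusivity of the groups makes $G=1$ already imply $I=0$. In the post-period: the target cell $\mathcal{T}_1$ ($S=1,G=1,I=0$) is treated, so Assumption \ref{ass::nospillover} gives observed outcome $Y_{i1}(1,0)$ in expectation; the pure-control cell $\mathcal{C}_1$ ($S=1,G=0,I=0$) is, by the definition of $\mathcal{C}$ together with stratum SUTVA (Assumption \ref{ass::sutva}), neither treated nor exposed, hence $Y_{i1}(0,0)$; and both placebo cells $\mathcal{T}_0,\mathcal{C}_0$ are untreated and (again by stratum SUTVA) unexposed, yielding $Y_{i1}(0,0)$. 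In the pre-period, Assumption \ref{ass::anticipation} replaces every baseline outcome by $Y_{i0}(0,0)$ in expectation, while Assumption \ref{ass::observationofI} guarantees that the conditional expectations on the $I=0$ cells are well defined.

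I would then substitute these identities into the treated-stratum DiD. Adding and subtracting $\mathbb{E}[Y_{i1}(0,0)-Y_{i0}(0,0)\mid S=1,G=1,I=0]$ splits it into $\mathbb{E}[Y_{i1}(1,0)-Y_{i1}(0,0)\mid S=1,G=1,I=0]$, which is $ATT(S=1,G=1,I=0)$ by definition, plus a purely counterfactual within-treated trend difference between the $G=1$ and $G=0$ cells. The placebo-stratum DiD contains no treatment effect and reduces directly to the analogous counterfactual trend difference at $S=0$. Collecting terms, $\delta$ equals $ATT(S=1,G=1,I=0)$ plus the difference of these two counterfactual trend differences, which is exactly $\Delta^{TT}_{I=0}$ as defined in Assumption \ref{ass:tinti0}; setting it to zero delivers the claim.

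The argument is largely bookkeeping, and the single step carrying the real content is the mapping for the pure-control cell $\mathcal{C}_1$. This is precisely where the DTD departs from the ordinary TD of Proposition \ref{theo::triplediff_impact}: there the within-treated control cell is the interference group $\mathcal{I}_1$, whose post-period outcome is $Y_{i1}(0,1)\neq Y_{i1}(0,0)$ and which therefore injects the spurious $-ASU(S=1,G=0)$ term. By conditioning on $I=0$ and relying on an uncontaminated control set (Assumption \ref{ass::observationofI}), the spillover-bearing cell is excluded from the comparison, so no $ASU$ term survives. I would accordingly take care to flag that the whole result hinges on $\mathcal{C}_1$ genuinely satisfying $\tilde{D}=0$, that is, on stratum SUTVA combined with the definition of $\mathcal{C}$; everything else in the proof is routine.
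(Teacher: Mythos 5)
Your proposal is correct and follows essentially the same route as the paper's own proof: translate the four cells of the DTD contrast into potential outcomes via Assumptions \ref{ass::sutva}, \ref{ass::anticipation}, \ref{ass::nospillover}, and \ref{ass::observationofI}, split the treated-target term by adding and subtracting $\mathbb{E}[Y_{i1}(0,0)\mid S=1,G=1,I=0]$ to isolate $ATT(S=1,G=1,I=0)$, and recognize the residual as $\Delta^{TT}_{I=0}$, which vanishes under Assumption \ref{ass:tinti0}. Your closing remark correctly pinpoints the substantive step — that the $\mathcal{C}_1$ cell contributes $Y_{i1}(0,0)$ rather than $Y_{i1}(0,1)$, which is exactly why no $ASU$ term appears here unlike in Proposition \ref{theo::triplediff_impact}.
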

\begin{proof}
	In the Supplementary Material.
\end{proof}

\noindent We next turn to identification of the spillover parameter. The DTD specification includes the coefficient $\psi$, which is designed to recover the spillover effect under a parallel trend-in-trends assumption rather than the classical DiD parallel trends assumption.

\noindent The coefficient $\psi$ admits the representation
\begin{align*}
	\psi
	=\;&
	\Big(\mathbb{E}[Y_{i1}-Y_{i0}\mid S=1,G=0,I=1]-\mathbb{E}[Y_{i1}-Y_{i0}\mid S=1,G=0,I=0]\Big)+
	\nonumber\\
	&-
	\Big(\mathbb{E}[Y_{i1}-Y_{i0}\mid S=0,G=0,I=1]-\mathbb{E}[Y_{i1}-Y_{i0}\mid S=0,G=0,I=0]\Big).
\end{align*}
For $\psi$ to identify $ASU(S=1,G=0,I=1)$, we impose a parallel trend-in-trends assumption for the comparison conditioning on $G=0$.

\begin{assumption}[Unconditional parallel trend-in-trends between the interference and pure control groups]
	\label{ass:tinti1}
	\begin{align*}
	\Delta^{TT}_{G=0}\;\equiv\;&\Big(\mathbb{E}[Y_{i1}(0,0)-Y_{i0}(0,0)\mid S=1,G=0,I=1]+\\
	&-\mathbb{E}[Y_{i1}(0,0)-Y_{i0}(0,0)\mid S=1,G=0,I=0]\Big)+\nonumber\\
	&-\Big(	\mathbb{E}[Y_{i1}(0,0)-Y_{i0}(0,0)\mid S=0,G=0,I=1]+\\
	&-\mathbb{E}[Y_{i1}(0,0)-Y_{i0}(0,0)\mid S=0,G=0,I=0]	\Big).
\end{align*}
\vspace{-10mm}
\end{assumption}

\noindent Hence, we obtain the corresponding identification statement.

\begin{proposition}
	\label{prop:dtd_psi}
	Under Assumptions \ref{ass::sutva}, \ref{ass::anticipation}, \ref{ass::observationofI}, and \ref{ass:tinti1}, the spillover coefficient in the DTD model identifies
	\[
	\psi = ASU(S=1,G=0,I=1).
	\]
\end{proposition}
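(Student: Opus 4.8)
The plan is to start from the displayed representation of $\psi$ as a triple difference of observed before/after outcome changes across the four cells $(S,G,I)\in\{1,0\}\times\{0\}\times\{1,0\}$, and to rewrite every observed conditional mean in terms of potential outcomes. The engine of the argument is a careful exposure-accounting step driven by Stratum SUTVA (Assumption \ref{ass::sutva}): among these four cells, only $\mathcal{I}_1=\{S=1,G=0,I=1\}$ contains units exposed to spillovers at $t=1$, whereas the pure-control cell $\{S=1,G=0,I=0\}$ and both placebo-stratum cells $\{S=0,G=0,I=1\}$ and $\{S=0,G=0,I=0\}$ are untreated and unexposed, since spillovers are confined to the treated stratum. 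Assumption \ref{ass::observationofI} guarantees the two $I=0$ pure-control cells are non-empty and observed, so the conditional expectations entering $\psi$ are well-defined.

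The key steps, in order, are as follows. First I would invoke consistency at $t=1$ together with the exposure map above to write $\mathbb{E}[Y_{i1}\mid S=1,G=0,I=1]=\mathbb{E}[Y_{i1}(0,1)\mid S=1,G=0,I=1]$ and $\mathbb{E}[Y_{i1}\mid\cdot]=\mathbb{E}[Y_{i1}(0,0)\mid\cdot]$ for the other three cells. Second, by no-anticipation (Assumption \ref{ass::anticipation}) and consistency at baseline, where no unit is treated or exposed, every baseline observed mean equals $\mathbb{E}[Y_{i0}(0,0)\mid\cdot]$. Third, in the single exposed cell I would add and subtract $\mathbb{E}[Y_{i1}(0,0)\mid S=1,G=0,I=1]$, splitting its contribution into the term $\mathbb{E}[Y_{i1}(0,1)-Y_{i1}(0,0)\mid S=1,G=0,I=1]$, which is exactly $ASU(S=1,G=0,I=1)$, plus a $Y_{i1}(0,0)$ piece.

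Fourth, I would collect the four remaining $\mathbb{E}[Y_{i1}(0,0)-Y_{i0}(0,0)\mid\cdot]$ terms and recognize that they reassemble verbatim into $\Delta^{TT}_{G=0}$ of Assumption \ref{ass:tinti1}, which that assumption sets to zero. This cancels all counterfactual-trend components and leaves $\psi=ASU(S=1,G=0,I=1)$, as claimed. Note that Assumption \ref{ass::nospillover} plays no role here, consistent with its absence from the hypotheses, since every cell under consideration is untreated ($G=0$).

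I expect the only genuine subtlety, and hence the point to get right rather than a computational obstacle, to be the exposure-accounting step: correctly arguing that the placebo-stratum interference cell $\{S=0,G=0,I=1\}$ receives the spillover-free outcome $Y_{i1}(0,0)$ despite its $I=1$ label. This is precisely where Stratum SUTVA does the work, ensuring the placebo-stratum difference is a pure counterfactual-trend contrast and thus gets differenced out by $\Delta^{TT}_{G=0}=0$. Once this mapping is fixed, the remaining algebra is a routine rearrangement mirroring the proof of Proposition \ref{prop:dtd_delta}.
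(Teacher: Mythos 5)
Your proposal is correct and follows essentially the same route as the paper's proof: rewrite the four-cell representation of $\psi$ in potential-outcome form (under stratum SUTVA only the $\{S=1,G=0,I=1\}$ cell receives $Y_{i1}(0,1)$ at $t=1$, the baseline terms reduce to $Y_{i0}(0,0)$ by no-anticipation), add and subtract $\mathbb{E}[Y_{i1}(0,0)\mid S=1,G=0,I=1]$ to isolate the ASU, and recognize the remaining trend terms as $\Delta^{TT}_{G=0}$, which Assumption \ref{ass:tinti1} sets to zero. Your exposure-accounting step is in fact stated more carefully than in the paper, whose proof contains a typo writing $Y_{i1}(1,0)$ where $Y_{i1}(0,1)$ is meant for the exposed cell.
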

\begin{proof}
	In the Supplementary Material.
\end{proof}

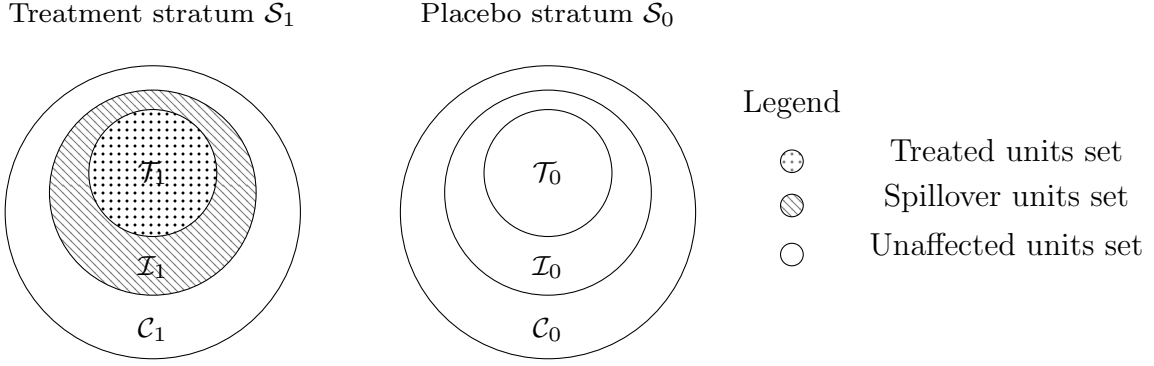
\begin{figure}
	\centering
	\begin{tikzpicture}[scale=1.3, transform shape]
		\node[set,text width=3cm]  (leftbg) at (0,-0.4) {};
		\node[set,text width=2.1cm, pattern=north west lines, pattern color=gray] (leftmid) at (0,-0.2) {};
		\node[set,preaction={fill=white},pattern=dots,text width=1.3cm] (leftin) at (0,0) {};
		\node[above=0.7cm of leftin] {\scriptsize Treatment stratum $\mathcal{S}_1$};
		
		\node at (0,0) {\scriptsize $\mathcal{T}_1$};
		\node at (0,-0.96) {\scriptsize $\mathcal{I}_1$};
		\node at (0,-1.6) {\scriptsize $\mathcal{C}_1$};
		
		\node[set,text width=3cm] (rightbg) at (4,-0.4) {};
		\node[set,text width=2.1cm] (rightmid) at (4,-0.2) {};
		\node[set,text width=1.3cm] (rightin) at (4,0) {};
		\node[above=0.7cm of rightin] {\scriptsize Placebo stratum $\mathcal{S}_0$};
		
		\node at (4,0) {\scriptsize $\mathcal{T}_0$};
		\node at (4,-0.96) {\scriptsize $\mathcal{I}_0$};
		\node at (4,-1.6) {\scriptsize $\mathcal{C}_0$};
		
		\matrix[draw=white, matrix of nodes, column sep=1mm] (m) at (8,0) {
			Legend \\
			\node [set,text width=3mm, pattern=dots, pattern color=gray, draw=black]{}; & Treated units set\\
			\node [set,text width=3mm, pattern=north west lines, pattern color=gray, draw=black]{}; & Spillover units set\\
			\node [set,text width=3mm, draw=black]{}; & Unaffected units set\\
		};
	\end{tikzpicture}
	\caption{Double-triple difference framework with two strata and three groups per stratum.}
	\label{fig::23diffscenario}
\end{figure}

\par Propositions \ref{prop:dtd_delta} and \ref{prop:dtd_psi} constitute our second main identification result: under Assumptions \ref{ass:tinti0} and \ref{ass:tinti1}, the DTD model identifies both the ATT and the ASU within a TD framework.

Note that when the parallel trend-in-trends assumptions hold unconditionally on control covariates -- as formalized in Assumptions \ref{ass::uptint}, \ref{ass:tinti0}, and \ref{ass:tinti1} -- the TD and DTD models in \eqref{mod::3diff} and \eqref{mod::23diff} can be consistently estimated by ordinary least squares (OLS), in line with \citet{ortiz2025better}. This is also the case in our empirical application. However, if parallel trend-in-trends holds only conditionally on control covariates, then naive OLS estimators become biased. This situation may arise, for instance, when group and/or stratum membership is correlated with observables that also affect outcome dynamics.
In such cases, consistent estimation requires covariate-adjusted procedures, such as regression adjustment, inverse probability weighting, or doubly-robust methods, as proposed by \citet{ortiz2025better}. We address this case by proposing a simple extension of our DTD difference framework to accommodate this situation.

\subsection{The double-triple difference framework in case of conditional parallel trend-in-trends}
\label{subsec:dr_dtdiff}

This subsection considers the case in which the parallel trend-in-trends assumptions hold only \emph{conditionally} on pre-treatment covariates. This may arise when selection into treatment is driven by observables, so that treatment and control groups differ systematically in both outcomes and observed characteristics. Albeit beyond the scope of our empirical application, a natural extension of our framework to this case can be implemented by separately estimating:
(a) the ATT, through a covariate-adjusted procedure applied to the sub-sample defined by the partitions $\mathcal{T}_1 \cup \mathcal{C}_1 \cup \mathcal{T}_0 \cup \mathcal{C}_0$, and
(b) the ASU, through a covariate-adjusted procedure applied to the sub-sample defined by the partitions $\mathcal{I}_1 \cup \mathcal{C}_1 \cup \mathcal{I}_0 \cup \mathcal{C}_0$.

\par We first introduce the extensions of Assumptions \ref{ass:tinti0} and \ref{ass:tinti1} defined conditionally on control covariates $X_i$, a $P$-dimensional vector. Those are more stringent cases compared to Assumptions \ref{ass:tinti0} and \ref{ass:tinti1}.

\begin{assumption}[Conditional parallel trend-in-trend between target and pure control groups]
\label{ass:tinti_c}
\begin{align*}
\Delta_{I=0,X}^{TT} &\equiv \mathbb{E}[Y_{i1} (0,0) - Y_{i0}(0,0) \mid S = 1, G = 1, I=0, X] +\\
&-\mathbb{E}[Y_{i1} (0,0) - Y_{i0}(0,0) \mid S = 1, G = 0, I=0,X]+\\
& - \Big( \mathbb{E}[Y_{i1} (0,0) - Y_{i0}(0,0)\mid S = 0, G = 1, I=0, X] +\\
&- \mathbb{E}[Y_{i1} (0,0) -  Y_{i0}(0,0) \mid S = 0, G = 0, I=0, X] \Big) = 0.
\end{align*}
\vspace{-5mm}
\end{assumption}

\begin{assumption}[Conditional parallel trend-in-trend between interference groups and pure control groups]
\label{ass:tinti1_c}
\begin{align*} \Delta_{G=0,X}^{TT} &\equiv \mathbb{E}[Y_{i1} (0,0) - Y_{i0}(0,0) \mid S = 1, G = 0, I=1, X] +\\&-\mathbb{E}[Y_{i1} (0,0) - Y_{i0}(0,0) \mid S = 1, G = 0, I=0, X]+\\
& - \Big( \mathbb{E}[Y_{i1} (0,0) - Y_{i0}(0,0)\mid S = 0, G = 0, I=1, X] +\\&- \mathbb{E}[Y_{i1} (0,0) -  Y_{i0}(0,0) \mid S = 0, G = 0, I=0, X] \Big) = 0.
\end{align*}
\vspace{-5mm}
\end{assumption}
In this context, we extend our DTD approach to the doubly-robust TD framework as proposed by \citet{ortiz2025better}. The TD estimand is defined as follows
\begin{align}\label{eq::dr_att}
\delta_{\texttt{dr}} &= \mathbb{E} \Bigg[ 
\Big( w^{S=1,G=1,I=0}_{i} - w^{S=1,G=1,I=0}_{i1,1}(X) \Big) 
\Big( Y_{i1} - Y_{i0} - m_i^{S=1,G=0,I=0}(X) \Big) \Bigg]+ \nonumber \\
&\quad + \mathbb{E} \Bigg[ 
\Big( w^{S=1,G=1,I=0}_{i} - w^{S=1,G=1,I=0}_{i0,1}(X) \Big) 
\Big( Y_{i1} - Y_{i1} - m_i^{S=0,G=1,I=0}(X) \Big)\Bigg]+ \nonumber \\
&\quad - \mathbb{E} \Bigg[ 
\Big( w^{S=1,G=1,I=0}_{i} - w^{S=1,G=1,I=0}_{i0,0}(X) \Big) 
\Big( Y_{i1} - Y_{i0} - m_i^{S=0,G=0,I=0}(X) \Big)\Bigg], \quad \forall i, 
\end{align}
with $m_i^{S=s,G=g,I=0}(X)=\mathbb{E}[Y_{i1}-Y_{i0}|S=s,G=g,I=0,X]$ determined by an outcome regression model and the estimated weights $w$ given by
\[
w^{S=1,G=1,I=0}_{i} = 
\frac{\mathbf{1}_i\{S=1, G=1,I=0\}}{\mathbb{E}[\mathbf{1}_i\{S=1, G=1,I=0\}]},
\]

\[
w^{S=1,G=1,I=0}_{isg}(X) = 
\frac{\mathbf{1}_i\{S=s, G=g,I=0\} \cdot \frac{p^{S=1,G=1,I=0}_{s,g}(X)}{(1-p^{S=1,G=1,I=0}_{s,g}(X))}}
{\mathbb{E} \left[ \mathbf{1}_i\{S=s, G=g,I=0\} \frac{p^{S=1,G=1,I=0}_{s,g}(X)}{(1-p^{S=1,G=1,I=0}_{s,g}(X))} \right]}.
\]
Note that, for generic strata $s$ and $s'$ and generic groups $g$ and $g'$, $$p^{S=s,G=g,I=0}_{s',g'}(X) = \frac{\mathbb{P}[S=s,G=g,I=0|X]}{\mathbb{P}[S=s,G=g,I=0|X]+\mathbb{P}[S=s',G=g',I=0|X]},$$ is determined by a generalized propensity score model following \citet{lechner2002program}.

\par A doubly-robust spillover parameter, denoted by $\phi_{\texttt{dr}}$, is constructed analogously to Equation \eqref{eq::dr_att} by replacing (i) $G=g$ with $G=0$ (for all $g$), (ii) $I=0$ with $I=j$, and (iii) redefining the weights accordingly: $w^{S=1,G=0,I=1}_{isj}(X)$ instead of $w^{S=1,G=1,I=0}_{isg}(X)$. Below we provide the identification statements for both parameters.

\begin{proposition}
	Under Assumptions \ref{ass::sutva}, \ref{ass::anticipation}, \ref{ass::nospillover}, \ref{ass::observationofI}, and \ref{ass:tinti_c}, the doubly-robust DTD parameter $\delta_{\texttt{dr}}$ defined in Equation \eqref{eq::dr_att} identifies
	\[
	\delta_{\texttt{dr}} = ATT(S=1,G=1,I=0).
	\]
\end{proposition}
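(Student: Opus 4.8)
The plan is to show that $\delta_{\texttt{dr}}$ is an appropriately signed sum of three doubly-robust difference-in-differences (DiD) comparisons, each contrasting the treated group $A=\{S=1,G=1,I=0\}$ with one of the three control groups $B_1=\{S=1,G=0,I=0\}$, $B_2=\{S=0,G=1,I=0\}$, and $B_3=\{S=0,G=0,I=0\}$, and then to collapse this sum to the ATT using the conditional parallel trend-in-trends assumption. Throughout I work at the population level with correctly specified nuisances (the outcome regressions $m_i^{s,g}(X)$ and the generalized propensity scores $p^{S=1,G=1,I=0}_{s,g}(X)$), noting that the doubly-robust structure of \eqref{eq::dr_att} guarantees the same conclusion whenever either the outcome or the propensity model is correct, exactly as in \citet{ortiz2025better}. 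For brevity I write $w^{A}_i$ for the normalized treated weight $w^{S=1,G=1,I=0}_{i}$ and $w^{A}_{B}(X)$ for the odds weight $w^{S=1,G=1,I=0}_{isg}(X)$ attached to control group $B=\{S=s,G=g,I=0\}$.

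First I would establish the key representation lemma: for the treated group $A$ and any control group $B$, the generic summand $\mathbb{E}[(w^{A}_i-w^{A}_{B}(X))(Y_{i1}-Y_{i0}-m_{B}(X))]$ equals $\mathbb{E}[Y_{i1}-Y_{i0}\mid A]-\mathbb{E}[m_{B}(X)\mid A]$. Expanding the product into four pieces, the term $\mathbb{E}[w^{A}_i(Y_{i1}-Y_{i0})]$ reduces to $\mathbb{E}[Y_{i1}-Y_{i0}\mid A]$ by the normalization of $w^A_i$, and $\mathbb{E}[w^{A}_i m_B(X)]=\mathbb{E}[m_B(X)\mid A]$. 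The two pieces carried by the odds weight cancel: conditioning on $X$ within $B$ and using $m_B(X)=\mathbb{E}[Y_{i1}-Y_{i0}\mid B,X]$ gives $\mathbb{E}[\mathbf{1}_B\,\tfrac{p}{1-p}\,(Y_{i1}-Y_{i0}-m_B(X))]=0$, hence $-\mathbb{E}[w^A_B(X)(Y_{i1}-Y_{i0})]+\mathbb{E}[w^A_B(X)m_B(X)]=0$. Applying this to the three summands of \eqref{eq::dr_att} and respecting the signs there yields $\delta_{\texttt{dr}}=\mathbb{E}[Y_{i1}-Y_{i0}\mid A]-\widetilde m_{10}-\widetilde m_{01}+\widetilde m_{00}$, where $\widetilde m_{sg}\equiv\mathbb{E}[m^{s,g}(X)\mid A]$ is the control trend of group $(s,g)$ reweighted to the treated covariate distribution.

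Next I would pass from observed to potential outcomes. For the treated group, Assumption \ref{ass::nospillover} removes spillover contamination so that $\mathbb{E}[Y_{i1}\mid A]=\mathbb{E}[Y_{i1}(1,0)\mid A]$, while Assumption \ref{ass::anticipation} gives $\mathbb{E}[Y_{i0}\mid A]=\mathbb{E}[Y_{i0}(0,0)\mid A]$. For each control group $B_k$, the conditioning event has either $S=0$ or $G=0$ together with $I=0$, so by Assumption \ref{ass::sutva} these units are neither treated nor exposed to spillovers; hence their observed trend equals $\mu_{sg}(X)\equiv\mathbb{E}[Y_{i1}(0,0)-Y_{i0}(0,0)\mid S=s,G=g,I=0,X]$, i.e. $m^{s,g}(X)=\mu_{sg}(X)$, and the reweighted control trends become $\widetilde m_{sg}=\mathbb{E}[\mu_{sg}(X)\mid A]$.

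Finally I would invoke Assumption \ref{ass:tinti_c}, which states precisely $\mu_{11}(X)=\mu_{10}(X)+\mu_{01}(X)-\mu_{00}(X)$ for (almost) every $X$. Averaging this identity over the treated covariate distribution gives $\widetilde m_{10}+\widetilde m_{01}-\widetilde m_{00}=\mathbb{E}[\mu_{11}(X)\mid A]=\mathbb{E}[Y_{i1}(0,0)-Y_{i0}(0,0)\mid A]$ by the law of iterated expectations. Substituting back, the $Y_{i0}(0,0)$ terms cancel and $\delta_{\texttt{dr}}=\mathbb{E}[Y_{i1}(1,0)-Y_{i1}(0,0)\mid S=1,G=1,I=0]=ATT(S=1,G=1,I=0)$. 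I expect the main obstacle to be the representation lemma of the second step—specifically, verifying the vanishing of the odds-weighted residual under correct specification together with the implicit overlap (positivity) that makes the generalized propensity weights well defined; once that algebra is in place, the remaining steps are bookkeeping with the potential-outcome assumptions and a pointwise application of conditional parallel trend-in-trends.
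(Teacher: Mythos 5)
Your proof is correct, but it reaches the result by a route that differs from the paper's in a meaningful way. You treat each of the three summands of Equation \eqref{eq::dr_att} as a self-contained doubly-robust DiD contrast and prove a per-contrast representation, $\mathbb{E}\big[(w^{A}_i-w^{A}_{B}(X))(Y_{i1}-Y_{i0}-m_{B}(X))\big]=\mathbb{E}[Y_{i1}-Y_{i0}\mid A]-\mathbb{E}[m_{B}(X)\mid A]$, whose key cancellation $\mathbb{E}\big[w^{A}_{B}(X)(Y_{i1}-Y_{i0}-m_{B}(X))\big]=0$ holds because $m_{B}$ is the \emph{true} conditional trend of group $B$; the generalized propensity scores need only be well defined (overlap), not correctly specified. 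The paper proceeds differently: it splits $\delta_{\texttt{dr}}$ into a pure IPW term plus outcome-regression correction terms, shows the corrections vanish identically by iterated expectations using the \emph{true} propensity scores (so the $m$'s could be arbitrary functions of $X$), and then identifies the IPW term as the ATT via Lemma \ref{lemma::w_02}, an IPW characterization of the ATT built from Lemma \ref{lemma::w_01} and Assumption \ref{ass:tinti_c}. In effect, your argument runs through the outcome-regression arm of double robustness while the paper's runs through the propensity-score arm; at the population level of this proposition, where both nuisances are evaluated at their truth, both arguments are valid and the two together essentially explain why the estimand is doubly robust. Your route is more self-contained --- conditional parallel trend-in-trends enters only once, pointwise in $X$, at the very end, and no auxiliary IPW representation of the ATT is needed --- whereas the paper's route isolates Lemmas \ref{lemma::w_01}--\ref{lemma::w_02} as reusable blocks that it recycles verbatim for the spillover parameter $\phi_{\texttt{dr}}$ in Proposition 6. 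One further point in your favor: you implicitly read through the typographical slips in Equation \eqref{eq::dr_att} and the paper's proof (e.g., the $Y_{i1}-Y_{i1}$ term and the subscript $w^{S=1,G=1,I=0}_{i1,1}$, which should pair the control group $S=1,G=0,I=0$ with $m^{S=1,G=0,I=0}$), and your bookkeeping with Assumptions \ref{ass::sutva}--\ref{ass::observationofI} when passing from observed to potential outcomes matches the paper's.
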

\begin{proof}
	In the Supplementary Material.
\end{proof}

\begin{proposition}
	Under Assumptions \ref{ass::sutva}, \ref{ass::anticipation}, \ref{ass::observationofI}, and \ref{ass:tinti1_c}, the doubly-robust spillover parameter $\phi_{\texttt{dr}}$ identifies
	\[
	\phi_{\texttt{dr}} = ASU(S=1,G=0,I=1).
	\]
\end{proposition}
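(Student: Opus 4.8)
The plan is to mirror the identification argument for $\delta_{\texttt{dr}}$, transposed to the spillover configuration. Here the focal cell is the interference set in the treated stratum, $\mathcal I_1=(S=1,G=0,I=1)$, and the three comparison cells are $\mathcal C_1=(S=1,G=0,I=0)$, $\mathcal I_0=(S=0,G=0,I=1)$, and $\mathcal C_0=(S=0,G=0,I=0)$. Writing $\Delta_i=Y_{i1}-Y_{i0}$, I would decompose $\phi_{\texttt{dr}}$, built from \eqref{eq::dr_att} by the stated substitutions, into three doubly-robust blocks
\[
\theta_c=\mathbb{E}\!\left[\big(w^{S=1,G=0,I=1}_{i}-w^{S=1,G=0,I=1}_{c}(X)\big)\big(\Delta_i-m^{c}(X)\big)\right],
\]
one for each comparison cell $c\in\{\mathcal C_1,\mathcal I_0,\mathcal C_0\}$, combined with signs $(+,+,-)$, where $m^{c}(X)=\mathbb{E}[\Delta_i\mid\text{cell }c,X]$.

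The crux is to show $\theta_c=\mathbb{E}[\Delta_i\mid\mathcal I_1]-\mathbb{E}[m^{c}(X)\mid\mathcal I_1]$. First I would verify the balancing property of the Lechner-type generalized propensity score: the odds weight $w^{S=1,G=0,I=1}_{c}(X)$, being proportional to $\mathbf 1\{c\}\,\mathbb P[\mathcal I_1\mid X]/\mathbb P[c\mid X]$, tilts the law of $X$ within cell $c$ onto that of the focal cell, so that $\mathbb{E}[w_c(X)\Delta_i]=\mathbb{E}[w_c(X)m^{c}(X)]=\mathbb{E}[m^{c}(X)\mid\mathcal I_1]$; together with $\mathbb{E}[w^{S=1,G=0,I=1}_{i}\Delta_i]=\mathbb{E}[\Delta_i\mid\mathcal I_1]$ and $\mathbb{E}[w^{S=1,G=0,I=1}_{i}m^{c}(X)]=\mathbb{E}[m^{c}(X)\mid\mathcal I_1]$, expanding the product gives the claim. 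I would then record the double-robustness: if instead the outcome model is correct but the weights are not, each $\Delta_i-m^{c}(X)$ has conditional mean zero within cell $c$, killing the $w_c$ term, while the $w^{S=1,G=0,I=1}_{i}$ term again returns $\mathbb{E}[\Delta_i\mid\mathcal I_1]-\mathbb{E}[m^{c}(X)\mid\mathcal I_1]$ — the same bias cancellation used in the $\delta_{\texttt{dr}}$ proof. Summing the three blocks yields
\[
\phi_{\texttt{dr}}=\mathbb{E}[\Delta_i\mid\mathcal I_1]-\mathbb{E}\!\left[m^{\mathcal C_1}(X)+m^{\mathcal I_0}(X)-m^{\mathcal C_0}(X)\mid\mathcal I_1\right].
\]

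It remains to read this in potential-outcome terms. Units in $\mathcal I_1$ are untreated but exposed ($D=0,\tilde D=1$), so Assumption \ref{ass::anticipation} gives $\Delta_i=Y_{i1}(0,1)-Y_{i0}(0,0)$; Assumption \ref{ass::nospillover} is not invoked precisely because the focal units receive no treatment, which is why it is absent from the hypotheses. The three comparison cells are untreated and, by stratum SUTVA (Assumption \ref{ass::sutva}) — $\mathcal I_0,\mathcal C_0$ lying in the spillover-free placebo stratum and $\mathcal C_1$ being the pure-control set — unexposed, so $m^{c}(X)=\mathbb{E}[Y_{i1}(0,0)-Y_{i0}(0,0)\mid\text{cell }c,X]$; Assumption \ref{ass::observationofI} ensures $\mathcal C_1,\mathcal C_0$ are observed so all regressions and weights are well defined. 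Finally, the conditional parallel trend-in-trends (Assumption \ref{ass:tinti1_c}) states precisely that $\mathbb{E}[Y_{i1}(0,0)-Y_{i0}(0,0)\mid\mathcal I_1,X]=m^{\mathcal C_1}(X)+m^{\mathcal I_0}(X)-m^{\mathcal C_0}(X)$ pointwise in $X$; taking $\mathcal I_1$-conditional expectations and substituting collapses the bracketed term to $\mathbb{E}[Y_{i1}(0,0)-Y_{i0}(0,0)\mid\mathcal I_1]$, leaving
\[
\phi_{\texttt{dr}}=\mathbb{E}\!\left[Y_{i1}(0,1)-Y_{i1}(0,0)\mid S=1,G=0,I=1\right]=ASU(S=1,G=0,I=1).
\]

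I expect the main obstacle to be the reweighting/double-robustness step rather than the potential-outcome bookkeeping: one must check carefully that the generalized propensity score odds weights reproduce the focal covariate law and that the product-form bias vanishes when either the propensity or the outcome model — but not necessarily both — is correctly specified. Once the three comparison cells are correctly classified as untreated and unexposed, the remaining substitutions are routine.
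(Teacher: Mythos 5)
Your proposal is correct and takes essentially the same approach as the paper's proof: both rest on the balancing property of the Lechner-type odds weights (the paper's Lemma \ref{lemma::w_01}), the vanishing of the regression-adjustment terms, and Assumption \ref{ass:tinti1_c} applied across the four $G=0$ cells, with the same potential-outcome bookkeeping under Assumptions \ref{ass::sutva}, \ref{ass::anticipation}, and \ref{ass::observationofI}. The only difference is organizational: the paper first derives an IPW representation of the ASU (Lemma \ref{lemma::w_02}, where the trend-in-trends condition is already used) and then obtains the result by the substitutions described in its proof of Proposition~5, whereas you evaluate the three doubly-robust blocks cell by cell and invoke the trend-in-trends condition at the very end --- the mathematical content is identical.
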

\begin{proof}
	In the Supplementary Material.
\end{proof}

\par These constitute the third main result of our paper: under conditional parallel trend-in-trends, $\delta_{\texttt{dr}}$ and $\phi_{\texttt{dr}}$ identify the ATT and ASU, respectively. Estimators of both $\delta_{\texttt{dr}}$ and $\phi_{\texttt{dr}}$ can be obtained by substituting $m^{S=s,\cdot,\cdot}(X)$ and $p^{S=s,\cdot,\cdot}(X)$ with predictions from their working models, and by replacing expectations with sample analogues. For further details, see \citet{ortiz2025better} and the implementation procedure is enclosed in the R package \texttt{triplediff}.

\par The TD and DTD estimators of Subsections \ref{subsec::triple} and \ref{sec::doubletriple}, together with the doubly-robust estimators introduced here under conditional parallel trend-in-trends, are evaluated through two simulation studies in Section \ref{sec::sim}. The former is implemented using OLS, whereas the latter relies on the doubly-robust estimators of \citet{ortiz2025better} to adjust for control covariates, using OLS for outcome regressions and logistic regression for propensity score models as working models.

\section{Monte Carlo Simulations}
\label{sec::sim}

We evaluate the frequentist properties of various estimators under both the TD and DTD frameworks in the presence of spillovers.
The first simulation study replicates the conditions of our empirical application, where panel data include multiple pre-treatment and post-treatment periods and the unconditional parallel trend-in-trends assumptions hold. In this setting, the TD and DTD setups correspond to those described in Sections \ref{subsec::triple} and \ref{sec::doubletriple}, with a straightforward extension to the panel-data context.
The second simulation study considers a two-period design and focuses on scenarios in which observed covariates are required for conditional parallel trend-in-trends assumptions to hold. We compare the doubly-robust TD estimator proposed by \citet{ortiz2025better} with its extension to the DTD framework introduced in Section \ref{subsec:dr_dtdiff}, accounting for spillovers. This simulation largely follows the design of \citet{ortiz2025better} and is implemented using the R package \texttt{triplediff}.

\subsection{Panel data and unconditional parallel trend-in-trends}

The data-generating process (DGP) is designed to closely mirror our empirical application in a panel-data setting. In particular, the ATT for the treated group is set to a known value, allowing for an assessment of estimator bias, error, and coverage under different model specifications and spillover scenarios.

Consider a set of $N=2{,}000$ observations repeatedly measured over time, with periods indexed by $t = 1, \dots, 10$. Note that, in this panel structure, the temporal treatment indicator is no longer binary and unit-specific. Instead, we use $T_t$ (for all $i$), which equals 1 in periods in which treatment is in place and 0 otherwise. In our setup, $T_t = 1$ for $t\geq 6$. Half of the sample units are randomly assigned to the treatment stratum and the other half to the placebo stratum. Within each stratum, half of the units are further assigned to the eligible group and the remaining half to the control group. Consequently, each sub-sample defined by the interaction between stratum and group consists of $500$ units. Moreover, within the control group of each stratum, we randomly select a sub-sample of units that may be subject to spillover effects, i.e., units that may indirectly benefit from (or be harmed by) the policy. We consider two sizes for the interference group, representing $10\%$ and $50\%$ of the overall control group ($\mathcal{I} \cup \mathcal{C}$). The dummy variables indicating stratum, group, and interference status are denoted by $S$, $G$, and $I$, consistently with the setup of Section \ref{sec::meth}.

We consider the following baseline DGP, corresponding to a standard panel-data extension of a three-way fixed effects specification:
\begin{align*}
	Y^{\text{SUTVA}}_{it} = \beta_i + \beta_t + \beta_{G,t} G_i + \beta_{S,t} S_i + \delta\, S_i \times G_i \times T_{t}+\varepsilon_{it}, \quad \forall i,t.
\end{align*}
Here, $\beta_i$ are unit fixed effects, $\beta_t$ are time fixed effects, and $\beta_{G,t}$ and $\beta_{S,t}$ capture group- and stratum-specific time effects. The parameter $\delta$ is the ATT, and $\varepsilon_{it}$ is an error term.

We generate $\beta_i$ as independent draws from $\mathcal{N}(\mu_u, \sigma_u^2)$. The parameters $\beta_t$, $\beta_{G,t}$, and $\beta_{S,t}$ are generated independently as random walks: $\beta_{\bullet,t}\sim \mathcal{N}(\beta_{\bullet,t-1},\sigma_t^2)$, with initial value $\beta_{\bullet,1}\sim \mathcal{N}(0,\sigma_t^2)$. Finally, $\varepsilon_{it}$ is drawn from $\mathcal{N}(0,\sigma_\epsilon^2)$. Hyperparameter values are extrapolated from the application and set to $\mu_u=0.90$, $\sigma_u=1$, $\sigma_t=0.05$, and $\sigma_\epsilon=0.50$. Lastly, $\delta$ is set to 0.20 (higher than in the application) to ensure a stronger signal.

The outcome $Y^{\text{SUTVA}}_{it}$ is generated under SUTVA, meaning that each unit's outcome is affected solely by its own treatment status and not by treatment assigned to other units. We then introduce spillovers through a set of scenarios that depart from this baseline:

\begin{itemize}
	\item \underline{Scenario 1}: This scenario is consistent with stratum SUTVA (Assumption~\ref{ass::sutva}), i.e., a unit's outcome may be influenced by the treatment status of other units \emph{within the same stratum}. Spillovers affect only the interference group in the treated stratum:
	\begin{align*}
		Y^{\text{S1}}_{it} = Y^{\text{SUTVA}}_{it} + \psi_1\, S_i \times I_i \times T_t, \quad \forall i,t.
	\end{align*}
	We assess estimator performance for different magnitudes of $\psi_1$, specifically 0.05, 0.10, and 0.20, corresponding to Scenarios 1.0, 1.1, and 1.2, respectively.
	\item \underline{Scenario 2}: This scenario relaxes stratum SUTVA by allowing spillovers to affect interference groups in both the treatment and placebo strata:
	\begin{align*}
		Y^{\text{S2}}_{it} = Y^{\text{SUTVA}}_{it} + \psi_1\, S_i \times I_i \times T_t + \psi_2\, (1-S_i) \times I_i \times T_t, \quad \forall i,t.
	\end{align*}
	In this scenario, $\psi_1$ is set to 0.10, while $\psi_2$ takes values $-0.10$ and $0.10$, defining Scenarios 2.0 and 2.1, respectively. Both TD and DTD models may be misspecified in this setting.
\end{itemize}

We conduct $K=1{,}000$ Monte Carlo iterations. For each scenario, we estimate both the TD and DTD models using their three-way fixed effects specifications. Estimation is carried out via OLS. For each iteration $k$, we store the impact estimate $\hat{\delta}^{(k)}$, the spillover estimate (when defined) $\hat{\psi}^{(k)}$, and their estimated clustered standard errors $\widehat{\text{se}}(\hat{\delta})$ and $\widehat{\text{se}}(\hat{\psi})$. For each model, scenario, parameter type, and interference-group size, we compute average bias, mean squared error (MSE), and 95\% coverage as follows:
\begin{align}\label{sim::performance}
	&\text{Bias} = \frac{1}{K}\sum_{k=1}^K \left(\hat{\bullet}^{(k)}-\bullet\right),
	\quad
	\text{MSE} = \frac{1}{K}\sum_{k=1}^K \left(\hat{\bullet}^{(k)}-\bullet\right)^2; \\
	&\text{Coverage}_{95\%} = \frac{1}{K} \sum_{k=1}^{K}
	\mathds{1}\!\left(
	\hat{\bullet}^{(k)}-1.96 \cdot \widehat{\text{se}}\!\left(\hat{\bullet}\right)^{(k)}
	\leq \bullet \leq
	\hat{\bullet}^{(k)}+1.96 \cdot \widehat{\text{se}}\!\left(\hat{\bullet}\right)^{(k)}
	\right),
	\quad \text{with} \quad \bullet \in \lbrace \delta, \psi \rbrace. \nonumber
\end{align}

\begin{table}[ht]
	\centering
    \begin{threeparttable}
    \caption{Average bias, MSE, and 95\% coverage of TD and DTD estimators with multiple time periods and unconditional parallel trend-in-trends -- results are displayed for different interference groups sizes and scenarios.}
    	\label{tab::sim1}
	\small
	\begin{tabular}{lll|rrr|rrr}
		\toprule
Scenario & Model & Type &
\multicolumn{6}{c}{Interference group size} \\
\cmidrule(lr){4-9}
& & &
\multicolumn{3}{c}{10\%} &
\multicolumn{3}{c}{50\%} \\
\cmidrule(lr){4-6}\cmidrule(lr){7-9}
& & &
Bias & MSE & Cov &
Bias & MSE & Cov \\
\midrule
		SUTVA        & triple        & ATT        & -0.002 & 0.001 & 0.944 & -0.002 & 0.001 & 0.944 \\ 
		SUTVA        & double-triple & ATT        & -0.002 & 0.001 & 0.944 & -0.002 & 0.001 & 0.936 \\ 
		SUTVA        & double-triple & spillover  & 0.000  & 0.043 & 0.946 & -0.001 & 0.041 & 0.948 \\ 
		\hline
		scenario 1.0 & triple        & ATT        & -0.007 & 0.001 & 0.939 & -0.027 & 0.002 & 0.866 \\ 
		scenario 1.0 & double-triple & ATT        & -0.002 & 0.001 & 0.944 & -0.002 & 0.001 & 0.936 \\ 
		scenario 1.0 & double-triple & spillover  & 0.000 & 0.025 & 0.946 & -0.001 & 0.024 & 0.948 \\ 
		\hline
		scenario 1.1 & triple        & ATT        & -0.012 & 0.001 & 0.925 & -0.052 & 0.004 & 0.605 \\ 
		scenario 1.1 & double-triple & ATT        & -0.002 & 0.001 & 0.944 & -0.002 & 0.001 & 0.936 \\ 
		scenario 1.1 & double-triple & spillover  & 0.000 & 0.013 & 0.946 & -0.001 & 0.011 & 0.948 \\ 
		\hline
		scenario 1.2 & triple        & ATT        & -0.022 & 0.001 & 0.893 & -0.102 & 0.011 & 0.095 \\ 
		scenario 1.2 & double-triple & ATT        & -0.002 & 0.001 & 0.944 & -0.002 & 0.001 & 0.936 \\ 
		scenario 1.2 & double-triple & spillover  & 0.000 & 0.003 & 0.946 & -0.001 & 0.001 & 0.948 \\ 
		\hline
		scenario 2.0 & triple        & ATT        & -0.022 & 0.001 & 0.891 & -0.102 & 0.011 & 0.094 \\ 
		scenario 2.0 & double-triple & ATT        & -0.012 & 0.001 & 0.930 & -0.052 & 0.004 & 0.672 \\ 
		scenario 2.0 & double-triple & spillover  & 0.000 & 0.013 & 0.946 & -0.001 & 0.011 & 0.948 \\ 
		\hline
		scenario 2.1 & triple        & ATT        & -0.002 & 0.001 & 0.944 & -0.002 & 0.001 & 0.944 \\ 
		scenario 2.1 & double-triple & ATT        & 0.008 & 0.001 & 0.937 & 0.048 & 0.004 & 0.725 \\ 
		scenario 2.1 & double-triple & spillover  & 0.000 & 0.013 & 0.946 & -0.001 & 0.011 & 0.948 \\ 
		\bottomrule
	\end{tabular}
	\begin{tablenotes}
    \scriptsize \item \textit{Notes:} Monte Carlo simulation with $2,000$ obs. and $1,000$ iterations; $10$ periods considered, treatment happens in period $7$; two strata of equal size, divided equally into target and control groups ($N=500$ each group); we consider two sizes for the interference group, representing 10\% and 50\% of the control groups in each stratum. Scenario SUTVA assumes SUTVA for all observations; scenario 1 assumes stratum SUTVA (Assumption \ref{ass::sutva}); scenario 2 considers multiple spillovers at once. The ATT parameter ($\delta$) is set to $0.2$ in all scenarios. Values of spillover in each scenario: 1.0: $\psi_1=0.05$; 1.1: $\psi_1=0.1$; 1.2: $\psi_1=0.2$; 2.0: $\psi_1=0.1$ and $\psi_2=-0.1$; 2.1: $\psi_1=0.1$ and $\psi_2=0.1$. For scenario 2, we only show results regarding spillover $\psi_1$.
  \end{tablenotes}
\end{threeparttable}
\end{table}

The results for the performance measures in \eqref{sim::performance}, by model, scenario, parameter type, and interference-group size, are reported in Table \ref{tab::sim1}.
Under SUTVA, the TD and DTD estimators display similar performance in terms of bias, MSE, and coverage. In the class of Scenario~1 subcases, where a single spillover affects the interference group within the treated stratum, the performance of the TD estimator deteriorates substantially due to non-identification. This results in considerable bias, increased MSE, and low coverage. As expected, the magnitude of the bias depends on two factors: (a) the size of $\psi_1$, as evidenced by the deterioration from Scenario~1.0 to 1.2 (with $\psi_1$ increasing from 0.05 to 0.20); and (b) the proportion of units the in interference group relative to the pure control group (from $10\%$ to $50\%$). In contrast, the DTD estimator maintains good performance across all Scenario~1 subcases and for both interference-group sizes.

Performance deteriorates further in Scenario~2, where spillovers also affect the interference group within the placebo stratum, implying misspecification for both TD and DTD models. In particular, when the two spillovers have opposite signs, the performance of the TD estimator worsens further. For instance, comparing Scenario~1.1 ($\psi_1 = 0.10$, $\psi_2 = 0$) with Scenario~2.0 ($\psi_1 = 0.10$, $\psi_2 = -0.10$), and focusing on the case of $50\%$ spillover size, the absolute bias increases from 0.052 to 0.102, leading to a higher MSE and severely reduced coverage (9.4\%). When considering the DTD estimator the absolute bias rises from 0.002 to 0.052, since it accounts for only one of the two spillovers. Conversely, in the knife-edge configuration of Scenario~2.1, where spillover effects are identical in size and sign ($\psi_1 = 0.10$, $\psi_2 = 0.10$), the identification issue affecting the TD estimator cancels out, yielding perfect identification. However, the DTD estimator displays a residual bias (0.048), as it accounts for only one of the two spillovers and compensation is not warranted. Both patterns are consistent with Proposition~S.2 in the Supplementary Material.

\begin{figure}[!t]
	\centering
	\includegraphics[width=\textwidth]{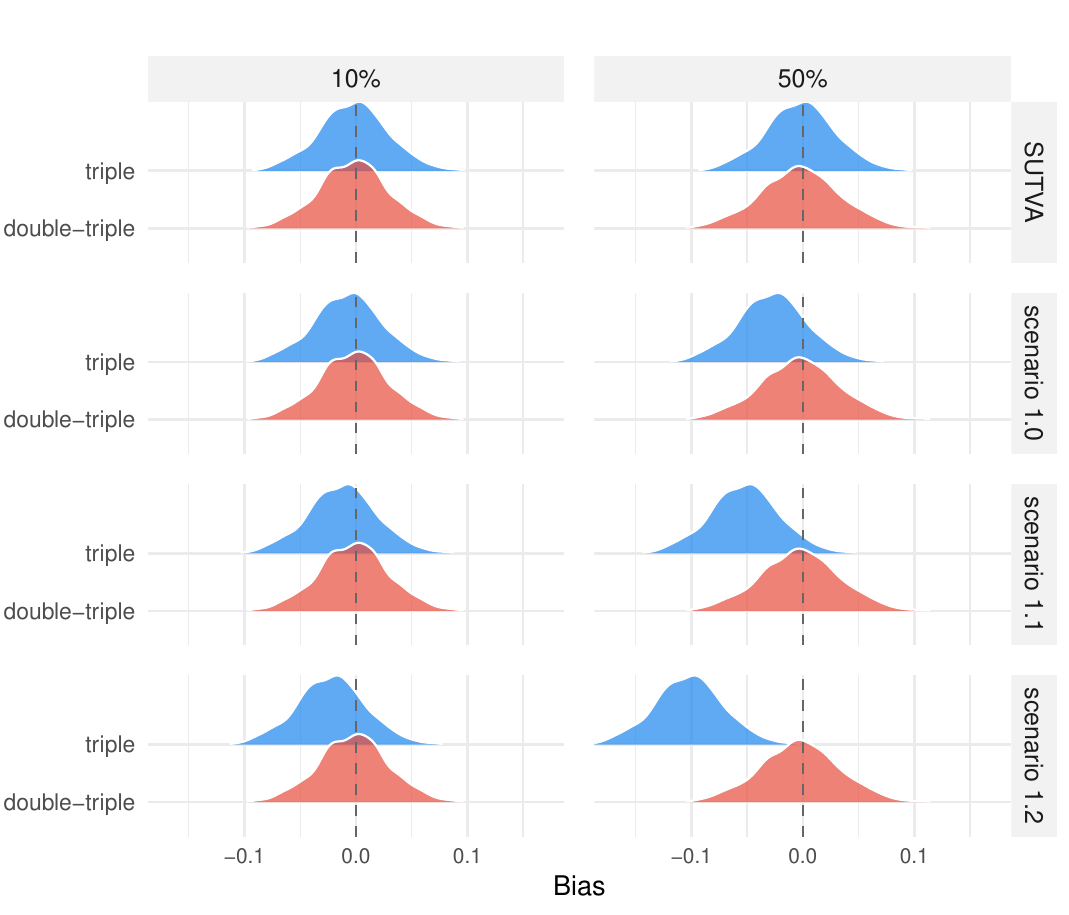}
	\caption{Bias of TD and DTD estimators for scenarios SUTVA, 1.0, 1.1, 1.2 at varying interference sets sizes (10\% and 50\% of the original control sets).}
	\label{fig:esempio}
\end{figure}

\begin{figure}[h]
	\centering
	\includegraphics[width=\textwidth]{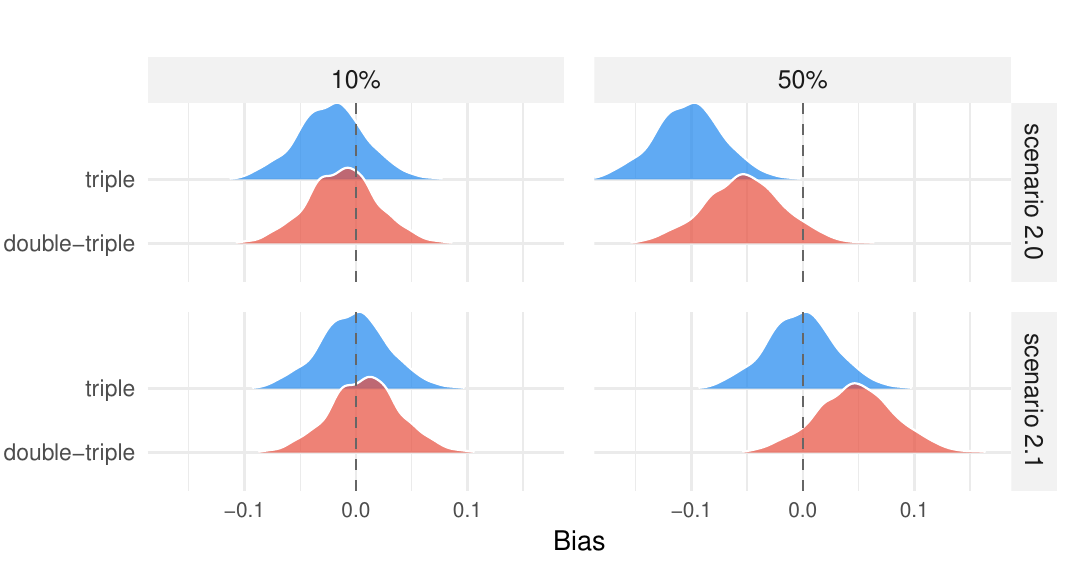}
	\caption{Bias of TD and DTD estimators for scenarios 2.0 and 2.1 (with multiple-spillover) 
    at varying interference sets sizes (10\% and 50\% of the original control sets).}
	\label{fig:esempio2}
\end{figure}

\subsection{Two-period data and conditional parallel trend-in-trend}

\begin{table}[h]
	\centering
        \begin{threeparttable}
    	\caption{Average bias, MSE, and 95\% coverage of TD and DTD estimators with two time periods and conditional parallel trend-in-trends -- results are displayed for different scenarios and sample sizes.}
        	\label{tab::sim2}
	\begin{tabular}{rlllrrr}
  \hline
  N & Scenario & Model & Type & Av. Bias & Av. MSE & Coverage \\
  \hline
  2,000 & SUTVA & triple & ATT & -0.001 & 0.017 & 0.954 \\
  2,000 & SUTVA & double-triple & ATT & -0.007 & 0.028 & 0.938 \\
  \cline{2-7}
  2,000 & spillover & triple & ATT & -12.495 & 156.338 & 0.000 \\
  2,000 & spillover & double-triple & ATT & -0.007 & 0.028 & 0.938 \\
  2,000 & spillover & double-triple & spillover & -0.007 & 0.037 & 0.946 \\
  \hline
  5,000 & SUTVA & triple & ATT & -0.001 & 0.007 & 0.954 \\
  5,000 & SUTVA & double-triple & ATT & -0.007 & 0.011 & 0.956 \\
  \cline{2-7}
  5,000 & spillover & triple & ATT & -12.487 & 156.001 & 0.000 \\
  5,000 & spillover & double-triple & ATT & -0.007 & 0.011 & 0.956 \\
  5,000 & spillover & double-triple & spillover & -0.011 & 0.015 & 0.944 \\
  \hline
  10,000 & SUTVA & triple & ATT & -0.003 & 0.004 & 0.948 \\
  10,000 & SUTVA & double-triple & ATT & -0.003 & 0.005 & 0.956 \\
  \cline{2-7}
  10,000 & spillover & triple & ATT & -12.517 & 156.715 & 0.000 \\
  10,000 & spillover & double-triple & ATT & -0.003 & 0.005 & 0.956 \\
  10,000 & spillover & double-triple & spillover & 0.001 & 0.008 & 0.952 \\
  \hline
\end{tabular}

	\begin{tablenotes}
    \scriptsize \item \textit{Notes:} Monte Carlo simulation with $500$ iterations. The interference group size is set to 50\% of the control groups in each stratum. SUTVA Scenario assumes SUTVA for all observations; spillover scenario relaxes it. The ATT parameter ($\delta$) is set to $50$ in all scenarios. Values of spillover ($\psi$) is set to $25$ in the spillover scenario.
  \end{tablenotes}
\end{threeparttable}
\end{table}
\begin{figure}[!htbp]
	\centering
	\includegraphics[width=\textwidth]{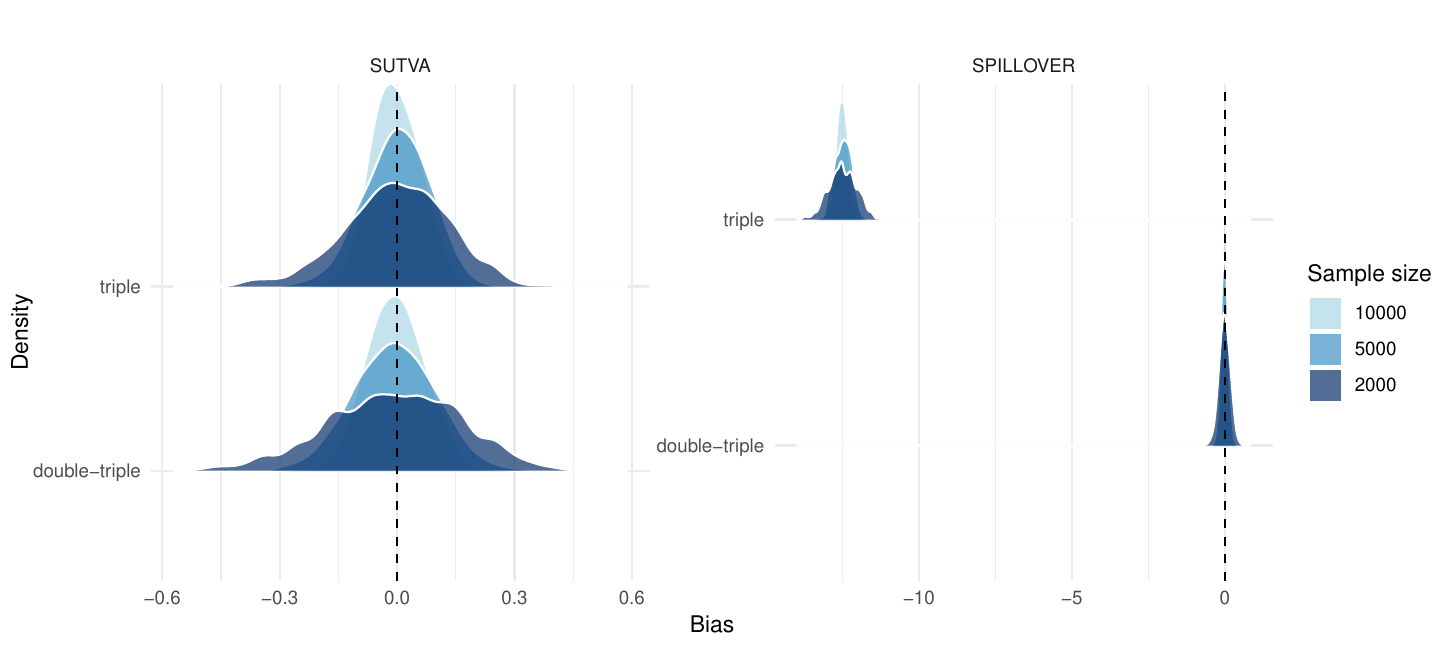}
	\caption{Bias of the doubly-robust TD and DTD estimators under SUTVA and spillovers at varying sample sizes.}
	\label{fig:sim2}
\end{figure}

We conducted a second simulation study to assess the performance of the doubly-robust TD estimator in the presence of spillovers, and to evaluate an extension of this estimation strategy that builds on the conceptual framework of our proposed DTD model under conditional parallel trend-in-trend assumptions. This simulation design largely overlaps with the one in \citet{ortiz2025better}.

Let us consider the basic two-period setup and notation of Section \ref{sec::meth}. The Monte Carlo simulations generate data for a two-period TD setup with unit-level covariates. Each unit $i$ has four covariates, $X_i=(X_{i,1},X_{i,2},X_{i,3},X_{i,4})$, drawn independently as detailed in Section S.9 of the Supplementary Material. Units are assigned to treatment and control groups across the two periods depending on these covariates, and outcomes are constructed to incorporate both covariate effects and group -- period interactions.
For each sub-group, assignment probabilities are defined as follows:
\begin{align*}
	\mathbb{P}[S=s,G=g\mid X]=p^{S=s,G=g}(X)
	=
	\frac{\exp\!\big(f^{ps}_{S=s,G=g}(X)\big)}{\sum_{(s,g)\in\mathcal{S}}\exp\!\big(f^{ps}_{S=s,G=g}(X)\big)}, \quad \forall s,g
\end{align*}
where $f^{ps}_{S=s,G=g}(X)$ is a linear predictor defined in Section S.9 of the Supplementary Material. Each unit is assigned to a subgroup according to these probabilities. Assignment to the interference group is implemented by randomly sampling half of the units assigned to sub-groups denoted by $G=0$.

The outcome under SUTVA is defined as
\begin{align}
	Y_{it}^{\text{SUTVA}}
	&=
	f^{reg}(X_i,S_i)
	+
	f^{reg}(X_i,S_i)\times T_{it}
	+
	\nu_i(X_i,S_i,G_i)
	+
	\delta\, S_i\times G_i\times T_{it}
	+
	\epsilon_{it}, \quad \forall i,t,
	\label{sim::sutva}
\end{align}
where $f^{reg}(X_i,S_i)$ is a linear predictor with coefficients that vary across strata $S$, and $\nu_i(X_i,S_i,G_i)$ adds additional time-invariant heterogeneity; precise definitions are provided in Section S.9 of the Supplementary Material. Lastly, $\epsilon_{it}$ (for all $t$) are independent standard normal random variables. We set $\delta=50$.

A second outcome specification relaxes SUTVA by allowing for spillovers, in line with Assumption \ref{ass::sutva}:
\begin{align}
	Y_{it}^{\text{spillover}}
	&=
	Y_{it}^{\text{SUTVA}}+
	\psi\, S_i\times I_i\times T_{it}, \quad \forall i,t,
	\label{sim::spillover}
\end{align}
with $\psi=25$. The number of Monte Carlo iterations is set to $K=500$. We evaluate performance in terms of bias, MSE, and coverage, using the measures in Equation \eqref{sim::performance}, for both the doubly-robust TD and DTD estimators under the DGP of SUTVA scenario defined in Equation \eqref{sim::sutva} and under the spillover scenario defined in Equation \eqref{sim::spillover}. We consider different overall sample sizes $N\in\lbrace 2{,}000, 5{,}000, 10{,}000\rbrace$.

Results are reported in Table \ref{tab::sim2}. Under SUTVA, the doubly-robust TD and DTD estimators perform comparably. However, in the spillover scenario, the performance of the TD estimator deteriorates markedly due to non-identifiability, leading to substantial bias, higher MSE, and low (or even zero) coverage. This is evident across all sample sizes. In contrast, the doubly-robust DTD estimator performs well in both scenarios. A slight deterioration is observed for smaller sample sizes, which is expected since, being applied to sub-samples, the DTD estimator requires larger overall sample sizes than the TD estimator for comparable precision.

Figure \ref{fig:sim2} illustrates the bias distribution under both scenarios. As the sample size increases, the distribution becomes more concentrated around its peak. Moreover, the substantial bias associated with the TD estimator is evident. This bias corresponds to $-\psi\times 0.5$, where $0.5$ is the fraction of the control group affected by spillovers. This result reinforces the underlying non-identifiability issue.

\section{Application to Special Economic Zones in Italy} \label{sec:Application}


We now turn to our empirical application, which examines the effects of tax incentives targeted at specific industries within geographically defined areas of an Italian region. This policy environment naturally generates multiple plausible control groups and thus provides a particularly suitable setting to illustrate both the presence and the empirical relevance of spillover effects. Specifically, we study the introduction of a Special Economic Zone (SEZ) in the southern Italian region of Campania.
Spillover effects induced by industrial policies have been discussed by \citet{Cerqua2017}, who acknowledge and address violations of SUTVA by combining matching methods with a DiD design. Like ours, their identification strategy rests on the availability of a control group that is not affected by spillovers.   
\subsection{Background}

A national regulatory framework for the establishment of SEZs was introduced in 2018 with the objective of fostering firm development in Southern Italy through tax incentives and administrative simplification.\footnote{Decree of the Presidency of the Council of Ministers (DPCM) 25 January 2018, n.\ 12, \url{https://www.gazzettaufficiale.it/eli/id/2018/2/26/18G00033/sg}.} 
Within this framework, regional governments in the South were allowed to establish SEZs, provided that the designated areas included at least one port integrated into the Trans-European Transport Network. Eligible regions could submit a strategic plan specifying (i) the municipalities included, (ii) the targeted sectors, and (iii) the set of measures available to eligible firms. The Campania SEZ was instituted in May 2018.\footnote{See, e.g., the institutional summary reporting the DPCM of 11 May 2018 establishing the SEZ Campania: \url{https://www.impresainungiorno.gov.it/route/zes?cod=campania}.} 
Sicily was instituted later, in July 2020, with the associated tax credit becoming operational only subsequently.\footnote{For the establishment of the Sicilian SEZs via DPCM(s) of 22 July 2020, see \url{https://www.impresainungiorno.gov.it/web/l-impresa-e-la-pa-centrale/zes-sicilia}. For implementing provisions on the investment tax credit, see the Revenue Agency communication model updated by provvedimento of 9 March 2021: \url{https://www.informazionefiscale.it/credito-imposta-investimenti-Mezzogiorno-Zes-modello-comunicazione-aggiornato}.}
\par The Campania SEZ includes 37 municipalities out of 550 in the region. Incentives primarily take the form of tax credits for capital and R\&D investments and, in addition, hiring-related measures. In our empirical application, we examine the impact of the SEZ on employment outcomes. Hiring incentives are targeted at disadvantaged groups (e.g.\ youth, the long-term unemployed, women, and individuals with disabilities) and are subject to restrictions intended to promote net job creation: eligible hires must not hold a permanent contract at the time of hiring nor have held one in the previous six months, and only newly created permanent positions qualify for the subsidy. These features plausibly mitigate concerns about job displacement through simple relabeling of existing employment relationships. 
\par The strategic plan places particular emphasis on sectors with relatively high export potential that could benefit from proximity to the ports and inland terminal infrastructure, including automotive, processed foods, apparel, machinery, electronics, wood products, and furniture. At the time of implementation, firms operating in these industries accounted for approximately 93\% of Campania's seaborne exports and employed about 99{,}000 workers (roughly one-third of regional manufacturing employment).\footnote{SEZ Campania strategic plan, \url{https://www.agenziacoesione.gov.it/wp-content/uploads/2019/09/Piano-Strategico-Campania.pdf}.}
\par This context naturally suggests using non-eligible sectors as a control group in the empirical analysis. Sectoral eligibility is constrained by EU State-aid rules, which explicitly exclude several industries from accessing national or regional incentives. These include agriculture, fisheries, coal, steel, shipbuilding, synthetic fibres, transport and related infrastructure, energy generation and distribution, financial services, head-office activities, and consulting. In addition, the regional strategic plan does not consider other sectors -- such as retail trade, telecommunications, and real estate services -- as eligible.
\par There are clear channels through which spillovers may affect employment in sectors not directly targeted by the policy. Given the explicit objective of promoting seaborne exports, one might reasonably anticipate spillovers into non-eligible shipbuilding and other maritime-related activities. More broadly, sectors such as construction, transport services (including maritime transport), and waste management may respond to SEZ-induced investment and the associated expansion in local economic activity. Supporting this view, evidence from the Apulia SEZ documents positive spillovers into logistics and transport despite their exclusion from direct policy benefits \citep{Bergantino2025}. In contrast, sectors with limited or negligible export orientation -- such as raw agricultural production (largely destined to food processing and domestic markets), public administration, retail trade, and some business services -- are less likely to experience such effects, at least in the short run. In our analysis, we leverage this distinction between likely and unlikely spillover-exposed sectors to illustrate the biases that can arise when indirect effects are ignored.
\par A second dimension of our quasi-experimental design is geographic. Our goal is to identify areas that are comparable to treated municipalities in economic and geographic characteristics, yet unlikely to be affected by the Campania SEZ. Nearby municipalities within Campania may be similar along many dimensions, but they are unsuitable controls due to potential spillovers, negative through labor mobility (albeit mitigated by the incentive design), or positive through complementarities and local demand effects. A more suitable comparison group can be drawn from municipalities in other regions that were designated as SEZs only later, and therefore were not yet treated at the time Campania implemented its SEZ. This choice is also consistent with the institutional evolution of place-based policies in Italy, as the SEZ framework built upon pre-existing targeted measures for disadvantaged areas. Accordingly, our second control dimension comprises municipalities included in the Sicily SEZ, which was formally instituted in 2020 but became operational only later. The municipalities designated as part of the Sicilian SEZ therefore provide a credible control group in the pre-activation period, while also strengthening the plausibility of the counterfactual by focusing on areas with similar economic profiles and prospective policy relevance.
\par Our empirical strategy implements a TD design along three dimensions: (i) eligible versus non-eligible industries; (ii) municipalities included in the Campania SEZ versus municipalities designated as SEZ areas in Sicily only in 2020; and (iii) time, distinguishing between pre- and post-policy periods. By drawing a clear distinction between sectors that are extremely unlikely to be affected by spillovers and sectors that may plausibly be exposed to indirect effects, we examine how estimates differ between the canonical TD specification and a DTD model that explicitly allows for spillover effects. For both approaches, we discuss the identifying assumptions and compare the baseline specification to a three-way fixed effects formulation.

\subsection{Data}

The data for our empirical application are drawn from AIDA (Orbis for Italy), accessed via the Bureau van Dijk platform. AIDA is a comprehensive firm-level panel database providing detailed financial and operational information on Italian companies. It includes balance sheets, income statements, and employment data, and covers firms across a wide range of industries, with up to ten years of historical data available. \citet{Cerqua2017} also rely on AIDA data to estimate employment and turnover effects of firm subsidies.\footnote{Further information about the dataset and access instructions are available at:
	\url{https://www.moodys.com/web/en/us/capabilities/company-reference-data/orbis/aida-orbis-for-italy.html}}
\par The final sample results from a cleaning process aimed at reducing noise and ensuring reliable measurement of the main variables. We excluded firms reporting zero revenues in all years, and firms that were in the process of liquidation during the period of analysis, hence not eligible for the hiring incentives. The resulting unbalanced panel consists of 87{,}631 unique firms observed between 2013 and 2020, comprising 454{,}985 observations. 
\par Table \ref{tab:desc_stats} reports descriptive statistics on the number of included firms and average employment in the pre-policy (2013-2017) and post-policy (2018-2020) periods, by region, targeted-sector status, and spillover exposure. Within control sectors, we further distinguish between those potentially exposed and those unlikely to be affected by spillovers.

\begin{table}[htbp]
	\centering
	\caption{Number of firms and average employment in the overall sample and key subsamples (pre-SEZ: 2013-17; post-SEZ: 2018-20)}
	\label{tab:desc_stats}
	\small
	\begin{tabular}{l r cc cc}
		\toprule
		& \multicolumn{1}{c}{N. firms} & \multicolumn{2}{c}{Pre} & \multicolumn{2}{c}{Post} \\
		\cmidrule(lr){3-4}\cmidrule(lr){5-6}
		&  & Mean & (SE) & Mean & (SE) \\
		\midrule
		\multicolumn{6}{l}{\textit{All sectors}} \\
		Campania    & 46{,}873 & 8.76 & (0.18) & 8.35 & (0.19) \\
		Sicily      & 40{,}758 & 7.95 & (0.13) & 7.64 & (0.37) \\
		Difference  & 87{,}631 & 0.81 & (0.23) & 0.71 & (0.42) \\
		\addlinespace
		
		\multicolumn{6}{l}{\textit{Target sectors}} \\
		Campania    & 5{,}174  & 13.76 & (0.40) & 13.52 & (0.48) \\
		Sicily      & 6{,}604  & 8.74  & (0.19) & 7.87  & (0.18) \\
		Difference  & 11{,}778 & 5.02  & (0.44) & 5.65  & (0.51) \\
		\addlinespace
		
		\multicolumn{6}{l}{\textit{Control sectors (all)}} \\
		Campania    & 35{,}438 & 7.89 & (0.20) & 7.48 & (0.20) \\
		Sicily      & 40{,}069 & 7.83 & (0.15) & 7.61 & (0.43) \\
		Difference  & 75{,}507 & 0.06 & (0.25) & -0.13 & (0.47) \\
		\addlinespace
		
		\multicolumn{6}{l}{\textit{Control sectors (interference group)}} \\
		Campania    & 13{,}642 & 10.14 & (0.38) & 10.04 & (0.45) \\
		Sicily      & 14{,}510 & 8.14  & (0.25) & 7.80  & (0.25) \\
		Difference  & 28{,}152 & 2.00  & (0.45) & 2.24  & (0.51) \\
		\addlinespace
		
		\multicolumn{6}{l}{\textit{Control sectors (pure control group)}} \\
		Campania    & 21{,}796 & 6.66 & (0.24) & 6.05 & (0.19) \\
		Sicily      & 25{,}559 & 7.64 & (0.19) & 7.49 & (0.67) \\
		Difference  & 47{,}355 & -0.98 & (0.30) & -1.44 & (0.69) \\
		\bottomrule
	\end{tabular}
	
	\vspace{0.75ex}
	\begin{minipage}{0.95\linewidth}
		\scriptsize
        \textit{Source:} Our processing on AIDA data, Bureau van Dijk. \\
		\textit{Notes:} Means and standard errors of the mean (in parentheses) are computed using all firm-year observations. Standard errors for differences are computed assuming unequal variances between groups. 
	\end{minipage}
\end{table}

\subsection{Results}

\begin{figure}[ht!]
	\centering
	\includegraphics[width=\textwidth]{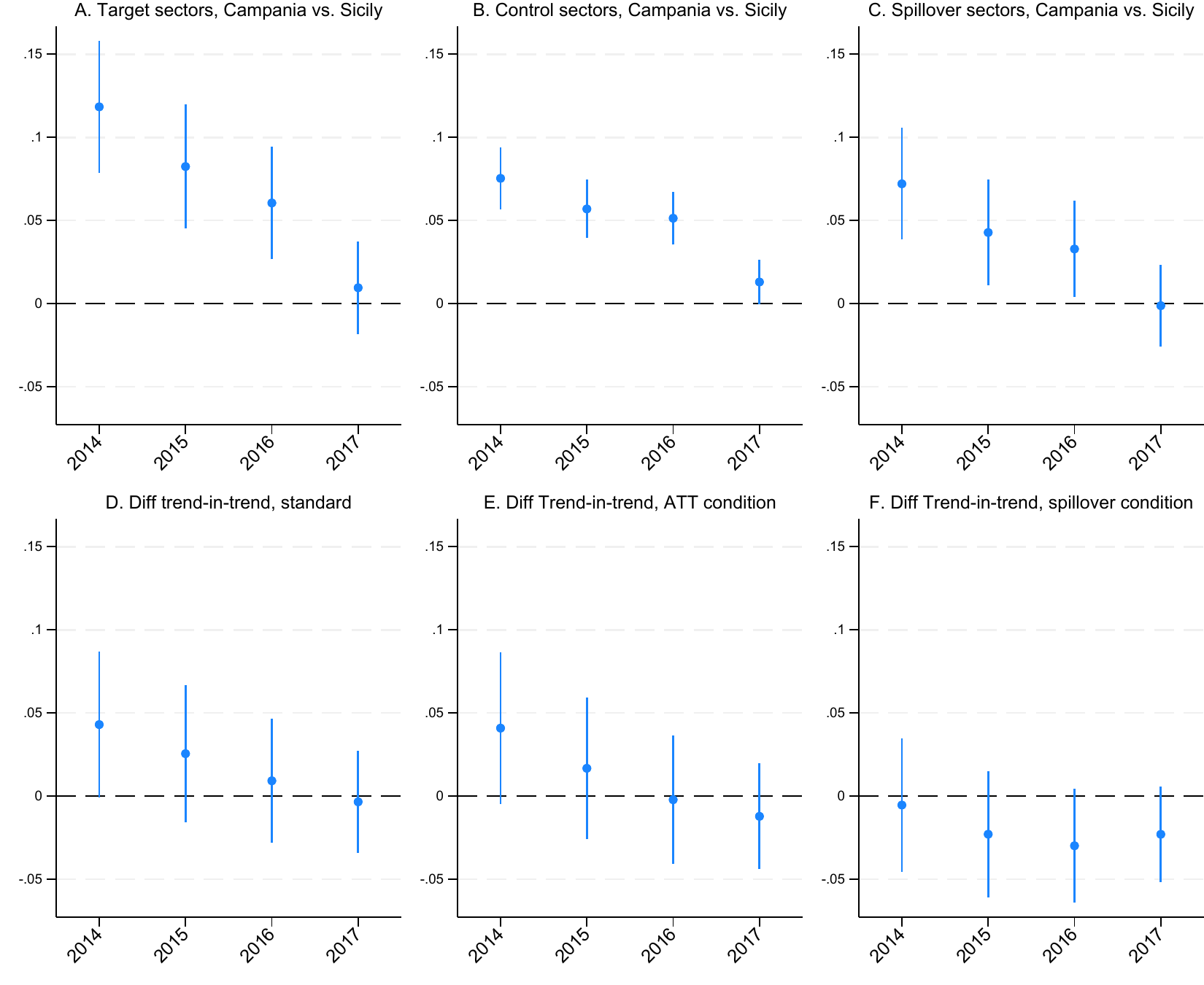}
	\caption{Tests on parallel trends and trend-in-trends assumptions in pre-policy years (2013-2017).}
	\label{fig:evstudy}
\end{figure}

We begin by assessing the plausibility of the identifying assumptions by testing for parallel trends and parallel trend-in-trends in the pre-policy window (2013-2017). We follow the event study specification in \citet{Miller2023}, while considering pre-policy years only. The estimating equations and the testing procedures are detailed in Section S.10 in the Supplementary Material. More specifically, Equation S.10 provides the standard event study test for the DiD parallel trends assumption, while Equation S.11 tests the parallel trend-in-trends assumption (Assumption \ref{ass::uptint}) required by the TD model. Note that our DTD framework relies on two parallel trend-in-trends assumptions, Assumption \ref{ass:tinti0} to identify the ATT and Assumption \ref{ass:tinti1} to identify the ASU. We test these assumptions by estimating Equation S.11 on the two following sub-samples: (i) firms in targeted sectors and those in control sectors not exposed to spillovers (i.e., partition set $\mathcal{T} \cup \mathcal{C}$, or $I_i=0$); and (ii) untreated firms (i.e., partition set $\mathcal{I} \cup \mathcal{C}$, or $G_i=0$). In the latter case, $G_i$ is replaced by $I_i$ in Equation S.11.

Figure \ref{fig:evstudy} summarizes the resulting estimates from Equation S.10 and S.11. Panel A-C show clear violations of the parallel trends assumption between the two regions, considering only the target sectors (panel A), control sectors (panel B), and control sectors exposed to spillovers (panel C). Instead, the second row of Figure \ref{fig:evstudy} reports tests on the parallel trend-in-trend assumptions. Panel D provides supportive evidence for the unconditional trend-in-trend assumption in Assumption \ref{ass::uptint} (the lead coefficient for 2014 has a $p$-value of 0.055). Hence, based pre-policy data is consistent with the conventional identifying assumptions used in TD models. However, this is not sufficient to justify interpreting the corresponding TD coefficients as the ATT when spillovers are present. Indeed, tests using only pre-policy data can indicate whether groups would have evolved similarly absent treatment, but in the presence of spillovers they cannot ensure that, after the policy, the comparison group still provides a valid proxy for the spillover-free counterfactual outcome \(Y(0,0)\).  Considering panels E and F, which test Assumptions \ref{ass:tinti0} and \ref{ass:tinti1}, we find no evidence of differential trend-in-trends, lending support to identification of the ATT and ASU under the DTD specification.

Table \ref{tab:SEZ_DTD_estimates} reports the main estimates of the TD and DTD models. The columns labelled TD and DTD estimate specifications without fixed effects. The three-way TD and three-way DTD columns report estimates from the saturated three-way fixed effects models, as defined in Sections S.10 of the Supplementary Material. Across the two TD specifications, the coefficient on the triple interaction term (\textit{T} $\times$ \textit{S} $\times$ \textit{G}) is stable at about 0.055--0.056 and statistically significant, implying a 5.5--5.6\% increase in employment for targeted-sector firms in the treated stratum under the no-spillovers interpretation. Once we allow for spillovers via DTD, the estimated ATT rises to about 6.6-6.8\%, and statistically significant. At the same time, the estimated spillover effect on exposed untreated sectors (\textit{T} $\times$ \textit{S} $\times$ \textit{I}) is positive and statistically significant (about 3.0-3.8\%). This pattern is consistent with positive spillovers attenuating standard TD estimates, in line with Proposition \ref{theo::triplediff_impact}. 
Note that the coefficient estimate of the triple interaction term in the TD model does not exactly correspond to the to the difference between its DTD counterpart and the DTD spillover estimate. This discrepancy arises because, in a multi-period setting with unbalanced control groups, pooled regression coefficients are not simple averages of year-by-year contrasts. The key qualitative implication is unchanged.

Finally, model fit diagnostics (AIC/BIC) favour the three-way DTD specification, which also delivers the most precise estimate of the spillover effect. We also conduct a robustness check in which we use distant municipalities in Campania (unlikely to be affected by the SEZ) as the control group, rather than Sicilian municipalities. The required trend-in-trend assumptions are also supported in this robustness exercise. Further details on municipality selection and the corresponding estimates are reported in Supplementary Material S.10. Overall, the robustness check confirms that the TD specification understates the employment effect of the SEZ, and it continues to detect spillover effects under the DTD specification. The estimated spillover coefficient is unchanged (3.8\%), while using non-neighbouring municipalities from the same region yields a larger estimated SEZ impact (9\%). Because these distant Campania municipalities are likely to differ systematically from those in Sicily (e.g.\ in proximity to ports and other infrastructure), we view the baseline estimate of 6.8\% as more credible; nevertheless, the combined evidence points to a statistically significant positive impact of the Campania SEZ on employment in the first three years after implementation.

\begin{table}[htbp]
\begin{threeparttable}
\centering
\caption{TD and DTD estimates}
	\label{tab:SEZ_DTD_estimates}
	\begin{tabular}{lcccc}
		\toprule
		& TD & three-way TD & DTD & three-way DTD \\ 
		\midrule
		T × G & -0.011 &  & -0.008 &  \\
		& (0.010) &  & (0.010) &  \\
		T × S & 0.021*** &  & 0.011$^+$ &  \\
		& (0.005) &  & (0.006) &  \\
		T × S × G & 0.056*** & 0.055*** & 0.066*** & 0.068*** \\
		& (0.013) & (0.013) & (0.014) & (0.013) \\
		T × I &  &  & 0.007 &  \\
		&  &  & (0.008) &  \\
		T × S × I &  &  & 0.030** & 0.038*** \\
		&  &  & (0.011) & (0.008) \\
		\midrule
		Num. Obs. & 454,985 & 454,985 & 454,985 & 454,985 \\
		AIC & 1,215,867.6 & 1,215,627.5 & 1,215,782.5 & 1,215,546.6 \\
		BIC & 5,270,760.1 & 5,270,387.8 & 5,270,653.1 & 5,270,295.8 \\
		\bottomrule
	\end{tabular}
    \begin{tablenotes}
    \scriptsize \item \textit{Notes:} The TD and DTD columns report OLS estimates from baseline post-dummy specifications. In the table, the indicator variable $S$ identifies firms in Campania SEZ municipalities, $G$ identifies firms in targeted sectors, $I$ identifies firms in non-targeted sectors exposed to spillovers, and $T$ identifies policy periods (2018-2020). The three-way columns report the corresponding saturated three-way fixed effects specifications with firm fixed effects and interactions between year fixed effects and the group indicators; consequently, lower-order post interactions are absorbed and not reported, whereas the yearly interactions are reported in Table S.1 in Supplementary Material. \\ 
		Standard errors are in parentheses. $^{+}p<0.1$, * $p<0.05$, ** $p<0.01$, *** $p<0.001$.
  \end{tablenotes}
\end{threeparttable}
\end{table}

\section{Conclusions}
\label{Sec:Conclusions}

\par This article shows that, in the presence of spillover effects, standard triple-difference (TD) designs generally do not identify the ATT, even when the relevant parallel trend-in-trends assumption is supported in the pre-policy period. The reason is that the conventional TD coefficient is, in general, a composite estimand that combines the treatment and spillover effects. For example, TD estimates of a positive ATT under a positive spillover are downward biased.

\par We address this identification problem by proposing a double-triple-difference (DTD) specification that separately identifies the ATT and the spillover effect. We make explicit the identification conditions underlying DTD, which rest on two distinct trend-in-trend assumptions that can be assessed in the pre-policy period. Consistency requires (i) spillovers to operate along only one of the two control dimensions (the treatment stratum), with the other dimension unaffected by spillovers, and (ii) the existence of a subset of control units within the treatment stratum that is not exposed to spillovers and can therefore serve as a clean comparison group.

\par We evaluate the finite-sample performance of TD and DTD estimators via Monte Carlo simulations, under different spillover scenarios. When spillovers are confined to the treatment stratum, DTD recovers unbiased estimates of both the ATT and ASU, whereas TD exhibits bias that increases with the prevalence of spillover exposure. When interference additionally crosses the treatment and placebo strata, both TD and DTD models are misspecified and can lead to biased estimates, underscoring the importance of carefully delineating the scope of interference in applied work.

\par We illustrate the empirical relevance of these results in an application to the introduction of a Special Economic Zone in Campania. The application discusses plausible spillover channels, implements pre-policy diagnostic tests aligned with the identifying assumptions, and compares estimates from TD and DTD specifications. TD designs are increasingly used in applied research, yet spillovers are often mentioned only briefly, if at all. Our framework provides a practical way to articulate the relevant estimands and assumptions when interference is a concern, and we encourage future work using TD designs to explicitly assess the plausibility of spillovers and, where appropriate, adopt estimators that remain valid under such interference.

\section*{Funding}
This work was partially funded by PNRR funds, PE10 project – ONFOODS, ‘Research and innovation network on food and nutrition. Sustainability, Safety and Security – Working ON Foods’ (code PE0000003, CUP J33C22002860001, 2022–2025). 

\bibliography{biblio.bib}
\bibliographystyle{apalike}

\section*{Supplementary Material}

\setcounter{subsection}{0}    
\setcounter{figure}{0}        
\setcounter{table}{0}         
\setcounter{equation}{0}      

\renewcommand{\thesubsection}{S\thesection.\arabic{subsection}}  
\renewcommand{\thefigure}{S\arabic{figure}}          
\renewcommand{\thetable}{S\arabic{table}}            
\renewcommand{\theequation}{S\arabic{equation}}     

\subsection{Proof of Proposition 1}
\begin{proof} 
We recall the definition of $\delta$ in the TD model of \textcolor{black}{Equation (1)} and translate to a specification that includes potential outcomes, by considering Assumptions \textcolor{black}{1, 2 and 3}, as follows
\begin{align*}
\delta = & \mathbb{E}[Y_{i1}(1,0) -Y_{i0}(0,0)\mid S = 1, G = 1] - \mathbb{E}[Y_{i1}(0,1)-Y_{i0}(0,0) \mid S = 1, G = 0]+\\
&- \Big( \mathbb{E}[Y_{i1}(0,0)-Y_{i0}(0,0) \mid S = 0, G = 1] - \mathbb{E}[Y_{i1}(0,0)-Y_{i0}(0,0) \mid S = 0, G = 0] \Big)\\
= & \mathbb{E}[Y_{i1}(1,0) -Y_{i1}(0,0)\mid S = 1, G = 1]  +  \mathbb{E}[Y_{i1}(0,0) -Y_{i0}(0,0)\mid S = 1, G = 1]+\\
&- \mathbb{E}[Y_{i1}(0,0)-Y_{i0}(0,0) \mid S = 1, G = 0]- \Big( \mathbb{E}[Y_{i1}(0,0)-Y_{i0}(0,0) \mid S = 0, G = 1] +\\
&- \mathbb{E}[Y_{i1}(0,0)-Y_{i0}(0,0) \mid S = 0, G = 0] \Big) - \mathbb{E}[Y_{i1}(0,1)-Y_{i1}(0,0) \mid S = 1, G = 0]\\
=& ATT(S=1,G=1) + \Delta^{TT} - ASU(S = 1, G = 0).
\end{align*}
Note that, under \textcolor{black}{Assumption 4}, $ \Delta^{TT}=0$, and therefore, the final result aligns with the statement of the proposition.
\end{proof}

\subsection{Proof of Proposition 2}
\begin{proof} 
We explicitly define $\psi$ under TD model as 
\begin{align*}
\psi = & \mathbb{E}[Y_{i1} -Y_{i0}\mid S = 1, G = 0] - \mathbb{E}[Y_{i1}-Y_{i0} \mid S = 0, G = 0],
\end{align*}
that, by exploiting the potential outcome framework and considering Assumptions \textcolor{black}{1, and 2}, can be rewritten as
\begin{align*}
\psi = & \mathbb{E}[Y_{i1} (0,1) -Y_{i0}(0,0)\mid S = 1, G = 0] - \mathbb{E}[Y_{i1}(0,0)-Y_{i0}(0,0) \mid S = 0, G = 0]\\
=& \mathbb{E}[Y_{i1} (0,1) -Y_{i1}(0,0)\mid S = 1, G = 0] + \mathbb{E}[Y_{i1} (0,0) -Y_{i0}(0,0)\mid S = 0, G = 0] + \\&- \mathbb{E}[Y_{i1}(0,0)-Y_{i0}(0,0) \mid S = 0, G = 0]\\
=& ASU(S = 1, G = 0) + \Delta^{T}_{G=0}.
\end{align*}
Note that, under \textcolor{black}{Assumption 5}, $\Delta^{T}_{G=0}=0$, and therefore, the final result aligns with the statement of the proposition. 
\end{proof}

\subsection{Identifying assumption in case of multiple spillovers}

To further investigate the identification of the triple difference parameter in presence of spillovers, we consider the assumption of stratum SUTVA (Assumption 1) as just one possible case. Suppose spillover effects also influence the placebo stratum, leading to multiple simultaneous spillovers across different groups. Specifically, we analyze a case with two types of spillover effects: one affecting the control group in the treatment stratum $\mathcal{I}_1$, as outlined in Assumption 1, and another influencing units within a specific group of the placebo stratum, with two possible cases.
In particular, Case A examines the case where the group impacted by spillover effects in the placebo stratum consists of potential beneficiaries, namely $\mathcal{T}_0$. This case is illustrated in Figure \ref{fig::scenario1}.

\begin{proposition} Under Assumptions 2, 3, 4 and Case A, the triple interaction parameter $\delta$ is defined as follows 
\begin{align*}
\delta = ATT(S=1,G=1)- ASU(S = 0, G = 1)-ASU(S = 1, G = 0).
\end{align*}
Thus, it identifies the ATT if and only if all ASU are equal to zero.
\end{proposition}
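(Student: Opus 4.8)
The plan is to mirror the argument used in the proof of Proposition 1, adapting the outcome-to-potential-outcome translation so that it reflects the additional spillover channel that is active in Case A. First I would write the triple-difference parameter $\delta$ as the difference of the two within-stratum DiD contrasts, exactly as in its definition: $\delta=(\text{stratum-}1\ \text{DiD})-(\text{stratum-}0\ \text{DiD})$, where the stratum-$1$ contrast compares $(S=1,G=1)$ against $(S=1,G=0)$ and the stratum-$0$ contrast compares $(S=0,G=1)$ against $(S=0,G=0)$. The crucial input is the mapping from each of the four $(S,G)$ cells to its realized potential outcome at $t=1$, and this is precisely where Case A differs from Proposition 1, since now \emph{two} groups are spillover-exposed rather than one.

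Concretely, I would record the realized post-period outcomes as follows. By Assumption 3 (no spillover on treated units) together with Assumption 2 (no anticipation), the treated cell $\mathcal{T}_1$ with $(S=1,G=1)$ realizes $Y_{i1}(1,0)$; the interference cell $\mathcal{I}_1$ with $(S=1,G=0)$ realizes $Y_{i1}(0,1)$; and the pure-control cell $\mathcal{I}_0$ with $(S=0,G=0)$ realizes $Y_{i1}(0,0)$. The distinctive feature of Case A is that the target cell in the placebo stratum, $\mathcal{T}_0$ with $(S=0,G=1)$, is now also exposed to spillovers, so it realizes $Y_{i1}(0,1)$ rather than $Y_{i1}(0,0)$. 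Using Assumption 2 to replace every baseline term by $Y_{i0}(0,0)$, substitution of these four realized outcomes yields an explicit potential-outcome expression for $\delta$.

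Next I would decompose each post-minus-baseline term by inserting and subtracting the spillover-free counterfactual $Y_{i1}(0,0)$. For the treated cell this isolates $ATT(S=1,G=1)$ (which coincides with $ATT^{\star}(S=1,G=1)$ under Assumption 3) plus a pure trend $Y_{i1}(0,0)-Y_{i0}(0,0)$; for each of the two spillover-exposed cells $\mathcal{I}_1$ and $\mathcal{T}_0$ it isolates $ASU(S=1,G=0)$ and $ASU(S=0,G=1)$, respectively, each again accompanied by a pure trend. Collecting the four pure-trend terms reproduces exactly the quantity $\Delta^{TT}$ of Assumption 4, which vanishes. Tracking the signs induced by the outer structure $\delta=(\text{stratum-}1\ \text{DiD})-(\text{stratum-}0\ \text{DiD})$, the $\mathcal{I}_1$ term enters the stratum-$1$ contrast with a minus sign while the $\mathcal{T}_0$ term enters the stratum-$0$ contrast with a plus sign and is then subtracted; both ASU terms therefore appear negatively, leaving $\delta=ATT(S=1,G=1)-ASU(S=1,G=0)-ASU(S=0,G=1)$, as claimed. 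The final sentence of the statement then follows, since driving $\delta$ to the ATT requires both ASU terms to vanish simultaneously.

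The computation itself is routine bookkeeping, so the main obstacle is conceptual rather than technical: correctly reading off the spillover structure of Case A and, in particular, recognizing that the placebo target group $\mathcal{T}_0$ no longer serves as a clean $Y(0,0)$ comparison. This is the single point at which Case A departs from Proposition 1, and it is exactly what generates the extra $-ASU(S=0,G=1)$ term. I would take care that the sign of this new term comes out negative, which is the detail most likely to trip up a careless substitution.
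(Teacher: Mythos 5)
Your proposal is correct and follows essentially the same route as the paper's proof: write $\delta$ as the difference of the two within-stratum DiD contrasts, substitute the Case-A realized potential outcomes (with $\mathcal{T}_0$ now realizing $Y_{i1}(0,1)$ rather than $Y_{i1}(0,0)$), insert and subtract $Y_{i1}(0,0)$ in each cell to separate the $ATT$ and the two $ASU$ terms from the pure trends, and observe that the collected trend terms equal $\Delta^{TT}$, which vanishes under Assumption 4. Your sign bookkeeping for the two $ASU$ terms matches the paper's derivation exactly, so nothing is missing.
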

\begin{proof} 
We recall the definition of $\delta$ in the TD model of \textcolor{black}{Equation (1)} and translate to a specification that includes potential outcomes, by considering Assumptions \textcolor{black}{2 and 3} and Case A, as follows
\begin{align*}
\delta = & \mathbb{E}[Y_{i1}(1,0) -Y_{i0}(0,0)\mid S = 1, G = 1] - \mathbb{E}[Y_{i1}(0,1)-Y_{i0}(0,0) \mid S = 1, G = 0]+\\
&- \Big( \mathbb{E}[Y_{i1}(0,1)-Y_{i0}(0,0) \mid S = 0, G = 1] - \mathbb{E}[Y_{i1}(0,0)-Y_{i0}(0,0) \mid S = 0, G = 0] \Big)\\
= & \mathbb{E}[Y_{i1}(1,0) -Y_{i1}(0,0)\mid S = 1, G = 1]  +  \mathbb{E}[Y_{i1}(0,0) -Y_{i0}(0,0)\mid S = 1, G = 1]+\\
&- \mathbb{E}[Y_{i1}(0,0)-Y_{i0}(0,0) \mid S = 1, G = 0]- \Big( \mathbb{E}[Y_{i1}(0,0)-Y_{i0}(0,0) \mid S = 0, G = 1] +\\
&- \mathbb{E}[Y_{i1}(0,0)-Y_{i0}(0,0) \mid S = 0, G = 0] \Big) - \mathbb{E}[Y_{i1}(0,1)-Y_{i1}(0,0) \mid S = 0, G = 1]+
\\ &- \mathbb{E}[Y_{i1}(0,1)-Y_{i1}(0,0) \mid S = 1, G = 0]\\
=&  ATT(S=1,G=1) + \Delta^{TT} - ASU(S = 0, G = 1)-ASU(S = 1, G = 0).
\end{align*}
Note that, under \textcolor{black}{Assumption 4}, $ \Delta^{TT}=0$, and therefore, the final result aligns with the statement of the proposition.
\end{proof}
\begin{figure} 
\begin{tikzpicture}[scale=1.3, transform shape]
\node[set,text width=3cm,pattern=north west lines, pattern color=gray] (leftbg) at (0,-0.4) {};
\node[set,preaction={fill=white},pattern=dots,text width=1.5cm] (left) at (0,0) {};
\node[above=0.7cm of left] {\scriptsize Treated stratum $\mathcal{S}_1$};
\node at (0,0) {\scriptsize$\mathcal{T}_1$}; 
\node at (0,-1.2) {\scriptsize$\mathcal{I}_1$}; 

\node[set, text width=3cm] (rightbg) at (4,-0.4) {};
\node[set,text width=1.5cm, pattern=north west lines, pattern color=gray] (right) at (4,0) {};
\node[above=0.7cm of right] {\scriptsize Placebo stratum $\mathcal{S}_0$};
\node at (4,0) {\scriptsize$\mathcal{T}_0$}; 
\node at (4,-1.2) {\scriptsize$\mathcal{I}_0$}; 
\matrix[draw=white, matrix of nodes, column sep=1mm]
          (m) at (8,0) {
  Legend \\
  \node [set,text width=3mm, pattern=dots, pattern color=gray, draw=black]{}; & Treated units set\\
   \node [set,text width=3mm, pattern=north west lines, pattern color=gray, draw=black]{}; & Spillover units set\\
\node [set,text width=3mm, draw=black]{}; & Unaffected units set\\
 };
\end{tikzpicture}
\caption{Multiple spillovers, Case A.}
\label{fig::scenario1}
\end{figure}
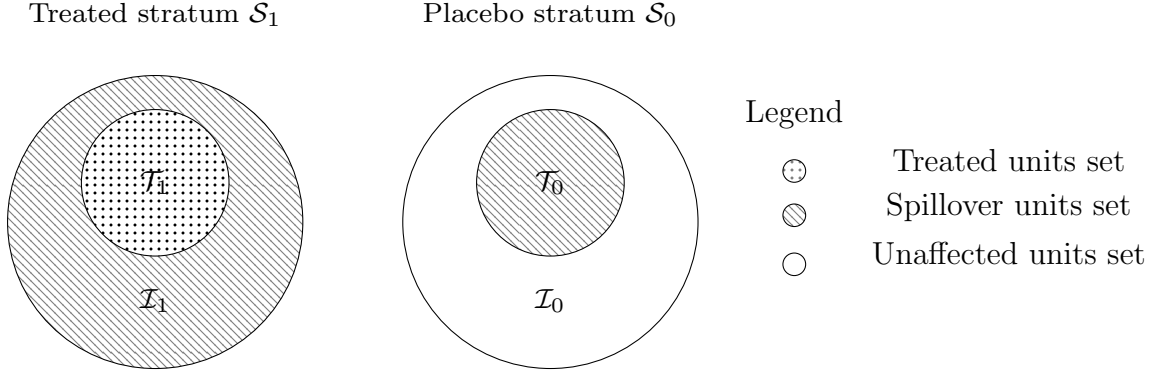

Moreover, Case B examines the case where the spillover effect in the placebo stratum impacts the group of potential non-beneficiaries, specifically $\mathcal{I}_0$. This case is illustrated in Figure \ref{fig::scenario2}. 

\begin{proposition} Under assumptions 2, 3, 4, and Case B, the triple interaction parameter $\delta$ is defined as follows 
\begin{align*}
\delta =  ATT(S=1,G=1) -ASU(S = 1, G = 0) - ASU(S = 0, G = 0).
\end{align*}
Thus, it identifies the ATT if and only if all ASU are equal to zero.
\end{proposition}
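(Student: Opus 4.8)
The plan is to mirror the worked derivation of Case~A, changing only the realized potential outcome of the spilled placebo cell. First I would expand the triple-interaction coefficient into its four before--after differences, writing $\delta$ as the treatment-stratum difference-in-differences minus the placebo-stratum difference-in-differences, where each cell contributes a term $\mathbb{E}[Y_{i1}-Y_{i0}\mid S=s,G=g]$. I would then replace every post-period outcome by the potential outcome actually realized under Case~B: $Y_{i1}(1,0)$ in $\mathcal{T}_1$ (treated, and spillover-free by Assumption~3), $Y_{i1}(0,1)$ in both interference cells $\mathcal{I}_1$ and $\mathcal{I}_0$ (untreated but exposed to spillovers), and $Y_{i1}(0,0)$ in the unaffected cell $\mathcal{T}_0$. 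Every baseline outcome is replaced by $Y_{i0}(0,0)$ using the no-anticipation Assumption~2.

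Next I would add and subtract the common counterfactual $\mathbb{E}[Y_{i1}(0,0)\mid S=s,G=g]$ inside each treated or spilled cell. The treated cell then splits into $ATT(S=1,G=1)$ plus a pure $Y(0,0)$ before--after difference; each interference cell splits into an $ASU(\cdot)$ term plus a pure $Y(0,0)$ difference. Collecting the four residual $Y(0,0)$ before--after differences reconstructs precisely the trend-in-trends functional $\Delta^{TT}$ of Assumption~4, which is set to zero and therefore drops out. What remains is the direct effect $ATT(S=1,G=1)$ from $\mathcal{T}_1$ together with the two spillover-on-untreated terms $ASU(S=1,G=0)$ carried by $\mathcal{I}_1$ and $ASU(S=0,G=0)$ carried by $\mathcal{I}_0$, which is the asserted decomposition. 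The ``if and only if'' statement follows immediately, since these two $ASU$ terms are the only residual contaminants of $\delta$: it equals the $ATT$ exactly when both vanish, and in the knife-edge case when they cancel.

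The step I expect to be the main obstacle is the sign bookkeeping on the placebo-stratum spillover. The cell $\mathcal{I}_0$ occupies the $G=0$ column of the placebo stratum, so it appears as the \emph{subtracted} term inside the placebo difference-in-differences, which is itself subtracted from the treatment-stratum contrast; this nested pair of minus signs is exactly where the derivation is most error-prone, and it is what ultimately fixes the sign with which $ASU(S=0,G=0)$ enters the final expression. This is the feature that distinguishes Case~B from Case~A, where the spilled placebo cell is instead $\mathcal{T}_0$ in the $G=1$ column, and it is also what decides whether identification is restored when the two spillovers are equal and opposite (as in Case~A) or when they act in concert. I would therefore carry the $\mathcal{I}_0$ contribution through the decomposition explicitly rather than by analogy with Case~A, and I would cross-check the resulting sign against the knife-edge characterization stated in the main text and against the Scenario~2 simulation, in which the treatment- and placebo-stratum interference groups carry the spillovers $\psi_1$ and $\psi_2$.
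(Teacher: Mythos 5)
Your approach is exactly the paper's: expand $\delta$ into the four cell-level before--after means, substitute the Case-B realized potential outcomes ($Y_{i1}(1,0)$ in $\mathcal{T}_1$ by Assumption 3, $Y_{i1}(0,1)$ in both $\mathcal{I}_1$ and $\mathcal{I}_0$, $Y_{i1}(0,0)$ in $\mathcal{T}_0$, with all baselines $Y_{i0}(0,0)$ by Assumption 2), add and subtract $\mathbb{E}[Y_{i1}(0,0)\mid\cdot]$ in each contaminated cell, and absorb the four residual $Y(0,0)$ trends into $\Delta^{TT}$, which Assumption 4 sets to zero.

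However, the sign check you flag as the main obstacle is not merely error-prone here --- it is decisive, and carried through it contradicts the statement you were asked to prove. The $\mathcal{I}_0$ cell enters with a double negative (subtracted inside the placebo DiD, which is itself subtracted from the treatment-stratum DiD), so $ASU(S=0,G=0)$ enters with a \emph{plus} sign:
\[
\delta \;=\; ATT(S=1,G=1)\;-\;ASU(S=1,G=0)\;+\;ASU(S=0,G=0).
\]
This is precisely what the final line of the paper's own supplementary derivation displays, what the main text's Case-B knife-edge requires (cancellation when the two spillovers have the \emph{same} sign and equal magnitude, ``act in concert'', as you correctly note), and what Scenario~2.1 of the simulation confirms (TD bias $\approx -(\psi_1-\psi_2)$ times the interference share, hence zero when $\psi_1=\psi_2=0.1$). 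So the proposition statement as printed, with $-ASU(S=0,G=0)$, contains a sign typo, and your sentence claiming the computation yields ``the asserted decomposition'' is the one place where your write-up papers over what your own (correct) bookkeeping and proposed cross-checks would reveal; you should state the corrected display above as the conclusion. A related nit: with the corrected sign, ``identifies the ATT if and only if all ASU are zero'' is not literally right --- $\delta = ATT(S=1,G=1)$ if and only if $ASU(S=1,G=0)=ASU(S=0,G=0)$, the knife-edge case you already mention, so vanishing spillovers are sufficient but not necessary.
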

\begin{proof} 
We recall the definition of $\delta$ in the TD model of \textcolor{black}{Equation (1)} and translate to a specification that includes potential outcomes, by considering Assumptions \textcolor{black}{2 and 3} and Case B, as follows
\begin{align*}
\delta = & \mathbb{E}[Y_{i1}(1,0) -Y_{i0}(0,0)\mid S = 1, G = 1] - \mathbb{E}[Y_{i1}(0,1)-Y_{i0}(0,0) \mid S = 1, G = 0]+\\
&- \Big( \mathbb{E}[Y_{i1}(0,0)-Y_{i0}(0,0) \mid S = 0, G = 1] - \mathbb{E}[Y_{i1}(0,1)-Y_{i0}(0,0) \mid S = 0, G = 0] \Big)\\
= & \mathbb{E}[Y_{i1}(1,0) -Y_{i1}(0,0)\mid S = 1, G = 1]  +  \mathbb{E}[Y_{i1}(0,0) -Y_{i0}(0,0)\mid S = 1, G = 1]+\\
&- \mathbb{E}[Y_{i1}(0,0)-Y_{i0}(0,0) \mid S = 1, G = 0]- \Big( \mathbb{E}[Y_{i1}(0,0)-Y_{i0}(0,0) \mid S = 0, G = 1] +\\
&- \mathbb{E}[Y_{i1}(0,0)-Y_{i0}(0,0) \mid S = 0, G = 0] \Big) +\mathbb{E}[Y_{i1}(0,1)-Y_{i1}(0,0) \mid S = 0, G = 0]+
\\ &- \mathbb{E}[Y_{i1}(0,1)-Y_{i1}(0,0) \mid S = 1, G = 0]\\
=&  ATT(S=1,G=1) + \Delta^{TT} -ASU(S = 1, G = 0) + ASU(S = 0, G = 0).
\end{align*}
Note that, under \textcolor{black}{Assumption 4}, $ \Delta^{TT}=0$, and therefore, the final result aligns with the statement of the proposition.
\end{proof}

\begin{figure} 
\begin{tikzpicture}[scale=1.3, transform shape]
\node[set,text width=3cm,pattern=north west lines, pattern color=gray] (leftbg) at (0,-0.4) {};
\node[set,preaction={fill=white},pattern=dots,text width=1.5cm] (left) at (0,0) {};
\node[above=0.7cm of left] {\scriptsize Treatment stratum $\mathcal{S}_1$};
\node at (0,0) {\scriptsize$\mathcal{T}_1$}; 
\node at (0,-1.2) {\scriptsize$\mathcal{I}_1$}; 

\node[set,text width=3cm, pattern=north west lines, pattern color=gray] (rightbg) at (4,-0.4) {};
\node[set,preaction={fill=white},text width=1.5cm] (right) at (4,0) {};
\node[above=0.7cm of right] {\scriptsize Placebo stratum $\mathcal{S}_0$};
\node at (4,0) {\scriptsize$\mathcal{T}_0$}; 
\node at (4,-1.2) {\scriptsize$\mathcal{I}_0$}; 
\matrix[draw=white, matrix of nodes, column sep=1mm]
          (m) at (8,0) {
  Legend \\
  \node [set,text width=3mm, pattern=dots, pattern color=gray, draw=black]{}; & Treated units set\\
   \node [set,text width=3mm, pattern=north west lines, pattern color=gray, draw=black]{}; & Spillover units set\\
\node [set,text width=3mm, draw=black]{}; & Unaffected units set\\
 };
\end{tikzpicture}
\caption{Multiple spillovers, Case B.}
\label{fig::scenario2}
\end{figure}
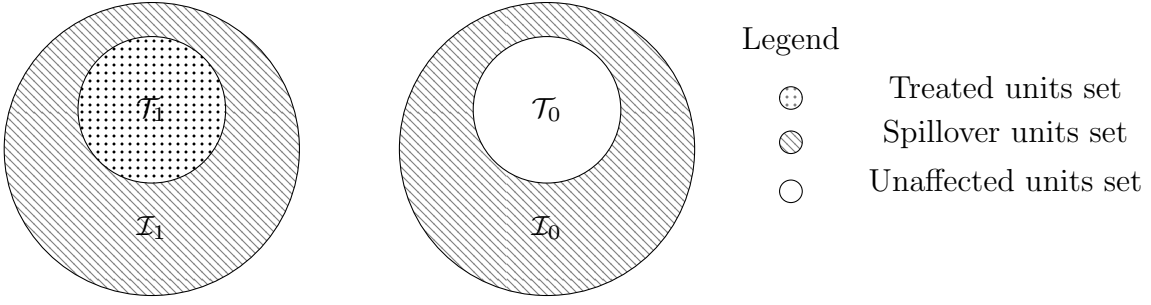
Note that in both Cases A and B, the TD parameter is not well identified. The non-identifying biases caused by multiple spillovers overlap and cancel out only when the two spillovers have, on average, the same magnitude but opposite signs (Case A) or same magnitude and same sign (Case B). Therefore, this situation arises when the treatment generates a diffusion effect in one group and a displacement effect in the other, and when these opposing effects are of equal magnitude (Case A), or when both groups experience diffusion or displacement effects of equal magnitude (Case B).

\subsection{Proof of Proposition 3}

\begin{proof}
We recall the definition of $\delta$ in the DTD model of \textcolor{black}{Equation (2)} and translate to a specification that includes potential outcomes, by considering Assumptions \textcolor{black}{1, 2, 3 and 6}, as follows
\begin{align*}
\delta = & \mathbb{E}[Y_{i1}(1,0) -Y_{i0}(0,0)\mid S = 1, G = 1, I=0] - \mathbb{E}[Y_{i1}(0,0)-Y_{i0}(0,0) \mid S = 1, G = 0, I=0]+\\
&- \Big( \mathbb{E}[Y_{i1}(0,0)-Y_{i0}(0,0) \mid S = 0, G = 1, I=0] - \mathbb{E}[Y_{i1}(0,0)-Y_{i0}(0,0) \mid S = 0, G = 0, I=0] \Big)\\
= & \mathbb{E}[Y_{i1}(1,0) -Y_{i1}(0,0)\mid S = 1, G = 1, I=0]  +  \mathbb{E}[Y_{i1}(0,0) -Y_{i0}(0,0)\mid S = 1, G = 1, I=0]+\\
&- \mathbb{E}[Y_{i1}(0,0)-Y_{i0}(0,0) \mid S = 1, G = 0, I=0]- \Big( \mathbb{E}[Y_{i1}(0,0)-Y_{i0}(0,0) \mid S = 0, G = 1, I=0] +\\
&- \mathbb{E}[Y_{i1}(0,0)-Y_{i0}(0,0) \mid S = 0, G = 0, I=0] \Big) \\
=&  ATT(S=1,G=1, I=0) + \Delta_{I=0}^{TT}.
\end{align*}
Note that, under \textcolor{black}{Assumption 7}, $ \Delta_{I=0}^{TT}=0$, and therefore, the final result aligns with the statement of the proposition.
\end{proof}

\subsection{Proof of Proposition 4}

\begin{proof}
We recall the definition of $\psi$ in the DTD model in \textcolor{black}{Equation (2)} and translate to a specification that includes potential outcomes by considering Assumptions \textcolor{black}{1, 2 and 6}, as follows
\begin{align*}
\psi = & \mathbb{E}[Y_{i1}(1,0) -Y_{i0}(0,0)\mid S = 1, G = 0, I=1] - \mathbb{E}[Y_{i1}(0,0)-Y_{i0}(0,0) \mid S = 1, G = 0, I=0]+\\
&- \Big( \mathbb{E}[Y_{i1}(0,0)-Y_{i0}(0,0) \mid S = 0, G = 0, I=1] - \mathbb{E}[Y_{i1}(0,0)-Y_{i0}(0,0) \mid S = 0, G = 0, I=0] \Big)\\
= & \mathbb{E}[Y_{i1}(1,0) -Y_{i1}(0,0)\mid S = 1, G = 0, I=1]  +  \mathbb{E}[Y_{i1}(0,0) -Y_{i0}(0,0)\mid S = 1, G = 0, I=1]+\\
&- \mathbb{E}[Y_{i1}(0,0)-Y_{i0}(0,0) \mid S = 1, G = 0, I=0]- \Big( \mathbb{E}[Y_{i1}(0,0)-Y_{i0}(0,0) \mid S = 0, G = 0, I=1] +\\
&- \mathbb{E}[Y_{i1}(0,0)-Y_{i0}(0,0) \mid S = 0, G = 0, I=0] \Big) \\
=&  ASU(S=1, G=0, I=1) + \Delta_{G=0}^{TT}.
\end{align*}
Note that, under \textcolor{black}{Assumption 8}, $ \Delta_{G=0}^{TT}=0$, and therefore, the final result aligns with the statement of the proposition.
\end{proof}

\subsection{Auxiliary Lemmas}

\begin{lemma}[] \label{lemma::w_01}
For a generic stratum $s'$, groups $g'$ or $j'$, we can state that  
\begin{align}\label{eq::lemma11}
\mathbb{E}[w_{is',g'}^{S=1,G=1,I=0}(X)(Y_{i1}-Y_{i0})] = \mathbb{E}[\mathbb{E}[Y_{i1}-Y_{i0}| X, S=s', G=g',I=0]| S=1, G=1, I=0],\\
\mathbb{E}[w_{s',j'}^{S=1,G=0,I=1}(X)(Y_{i1}-Y_{i0})] = \mathbb{E}[\mathbb{E}[Y_{i1}-Y_{i0}| X, S=s', G=0,I=j']| S=1, G=0, I=1].\label{eq::lemma12}
\end{align}
\end{lemma}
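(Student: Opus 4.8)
The plan is to recognize both identities as standard inverse-probability-weighting results: the normalized weight reweights the comparison group so that its covariate profile matches that of the target group, after which the sample-analogue of a conditional expectation emerges. I would prove \eqref{eq::lemma11} in full and then obtain \eqref{eq::lemma12} by relabeling the groups. Throughout, write $A=\{S=1,G=1,I=0\}$ for the target group and $B=\{S=s',G=g',I=0\}$ for the comparison group indexing the weight.

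First I would simplify the odds factor in the weight. By the definition of the generalized propensity score,
\[
\frac{p^{S=1,G=1,I=0}_{s',g'}(X)}{1-p^{S=1,G=1,I=0}_{s',g'}(X)}=\frac{\mathbb{P}[A\mid X]}{\mathbb{P}[B\mid X]},
\]
so the unnormalized weight is $\mathbf{1}\{B\}\,\mathbb{P}[A\mid X]/\mathbb{P}[B\mid X]$. This step carries an implicit overlap (positivity) requirement, namely $\mathbb{P}[B\mid X]>0$ on the support of $X$ given $A$, which makes the ratio well defined; I would state it explicitly.

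Second, I would evaluate the normalizing constant in the denominator of the weight by the law of iterated expectations, conditioning on $X$ and using $\mathbb{E}[\mathbf{1}\{B\}\mid X]=\mathbb{P}[B\mid X]$:
\[
\mathbb{E}\!\left[\mathbf{1}\{B\}\frac{\mathbb{P}[A\mid X]}{\mathbb{P}[B\mid X]}\right]
=\mathbb{E}\!\left[\mathbb{P}[B\mid X]\frac{\mathbb{P}[A\mid X]}{\mathbb{P}[B\mid X]}\right]
=\mathbb{E}\big[\mathbb{P}[A\mid X]\big]=\mathbb{P}[A].
\]
Then, for the numerator of the left-hand side of \eqref{eq::lemma11}, I would again condition on $X$ and apply $\mathbb{E}[\mathbf{1}\{B\}(Y_{i1}-Y_{i0})\mid X]=\mathbb{P}[B\mid X]\,\mathbb{E}[Y_{i1}-Y_{i0}\mid X,B]$, so that $\mathbb{P}[B\mid X]$ cancels:
\[
\mathbb{E}\!\left[\mathbf{1}\{B\}\frac{\mathbb{P}[A\mid X]}{\mathbb{P}[B\mid X]}(Y_{i1}-Y_{i0})\right]
=\mathbb{E}\big[\mathbb{P}[A\mid X]\,\mathbb{E}[Y_{i1}-Y_{i0}\mid X,B]\big].
\]
Dividing by $\mathbb{P}[A]$ and invoking the elementary fact that $\mathbb{E}[\mathbb{P}[A\mid X]\,h(X)]/\mathbb{P}[A]=\mathbb{E}[h(X)\mid A]$ for any integrable $h$, with $h(X)=\mathbb{E}[Y_{i1}-Y_{i0}\mid X,B]$, delivers exactly the right-hand side of \eqref{eq::lemma11}.

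The argument is essentially mechanical, so I do not expect a genuinely hard step. The main thing requiring care is bookkeeping: correctly reducing the odds ratio from the generalized-propensity-score definition, and keeping straight that the inner conditional expectation is taken over the comparison group $B$ while the outer averaging is over the target distribution $A$. The one substantive caveat is the overlap condition guaranteeing that the weights are well defined and integrable, which I would flag alongside the derivation. Finally, identity \eqref{eq::lemma12} follows verbatim by setting $A=\{S=1,G=0,I=1\}$ and $B=\{S=s',G=0,I=j'\}$ and repeating the three steps.
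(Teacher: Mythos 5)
Your proposal is correct and follows essentially the same route as the paper's proof: reduce the odds factor of the generalized propensity score to $\mathbb{P}[A\mid X]/\mathbb{P}[B\mid X]$, apply the law of iterated expectations conditionally on $X$ to both the numerator and the normalizing denominator so that $\mathbb{P}[B\mid X]$ cancels, and recognize the resulting ratio $\mathbb{E}\big[\mathbb{P}[A\mid X]\,\mathbb{E}[Y_{i1}-Y_{i0}\mid X,B]\big]/\mathbb{P}[A]$ as the conditional expectation over the target group $A$, with \eqref{eq::lemma12} following by relabeling. Your explicit statement of the overlap condition is a small but worthwhile addition that the paper leaves implicit.
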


\begin{proof} We prove result in Equation \eqref{eq::lemma11} as result in Equation \eqref{eq::lemma12} follows directly. Let us define the quantity by exploiting the definition of $w_{is',g'}^{S=1,G=1,I=0}(X)$ as
\begin{align}
\mathbb{E}[w_{is',g'}^{S=1,G=1,I=0}(X)(Y_{i1}-Y_{i0})] = 
\frac{\mathbb{E}\bigg[\mathbf{1}_{i}\{S=s, G=g,I=0\} \cdot \frac{p^{S=1,G=1,I=0}_{s,g}(X)}{(1-p^{S=1,G=1,I=0}_{s,g}(X))}(Y_{i1}-Y_{i0})\bigg]}
{\mathbb{E} \left[ \mathbf{1}_{i}\{S=s, G=g,I=0\} \frac{p^{S=1,G=1,I=0}_{s,g}(X)}{(1-p^{S=1,G=1,I=0}_{s,g}(X))} \right]}.
\label{eq::1}
\end{align}
By considering the law of iterated expectation and conditional probabilities, we can reexpress the enumerator of Equation \eqref{eq::1} as 
\begin{align} \nonumber
&\mathbb{E}\bigg[\mathbf{1}_{i}\{S=s, G=g,I=0\} \cdot \frac{p^{S=1,G=1,I=0}_{s,g}(X)}{(1-p^{S=1,G=1,I=0}_{s,g}(X))}(Y_{i1}-Y_{i0})\bigg] = \\\nonumber
&=\mathbb{E}\bigg[\frac{\mathbb{E}[ \mathbf{1}_{i} \{ S=1, G=1,I=0\} |X]}{\mathbb{E}[ \mathbf{1}_{i} \{ S=s, G=g,I=0\} |X]}\mathbb{E}[\mathbf{1}_{i}\{S=s, G=g,I=0\} \cdot(Y_{i1}-Y_{i0})|X]\bigg]\\\nonumber
&=\mathbb{E}\bigg[\mathbb{E}[ \mathbf{1}_{i} \{ S=1, G=1,I=0\} |X]\mathbb{E}[Y_{i1}-Y_{i0}|X, S=s, G=g, I=0]\bigg]\\
&=\mathbb{E}\bigg[\mathbf{1}_{i} \{ S=1, G=1,I=0\} \mathbb{E}[Y_{i1}-Y_{i0}|X, S=s, G=g, I=0]\bigg],
\label{eq::2}
\end{align}
and the denominator of Equation \eqref{eq::1} similarly as
\begin{align}
&\mathbb{E} \left[ \mathbf{1}_{i}\{S=s, G=g,I=0\} \frac{p^{S=1,G=1,I=0}_{s,g}(X)}{(1-p^{S=1,G=1,I=0}_{s,g}(X))} \right]=\nonumber\\
& =\mathbb{E}\bigg[\frac{\mathbb{E}[ \mathbf{1}_{i} \{ S=1, G=1,I=0\} |X]}{\mathbb{E}[ \mathbf{1}_{i} \{ S=s, G=g,I=0\} |X]}\mathbb{E}[\mathbf{1}_{i}\{S=s, G=g,I=0\} |X]\bigg]\nonumber\\
&=\mathbb{E}[ \mathbf{1}_{i} \{ S=1, G=1,I=0\}].
\label{eq::3}
\end{align}
Combining equations \eqref{eq::2} and \eqref{eq::3} with \eqref{eq::1} immediately proves the result.
\end{proof}

\begin{lemma}[Definition of ATT and ASU as an inverse probability weighting parameter]
Let us define the ATT as
\begin{align}\label{eq::lemma21}
ATT^{S=1,G=1,I=0}=& \mathbb{E}\bigg[w_i^{S=1,G=1,I=0} [Y_{i1}(1,0) -Y_{i0}(0,0)]\bigg] +\\&- \mathbb{E}\bigg[[w^{S=1,G=1,I=0}_{i1,0}(X)+w^{S=1,G=1,I=0}_{i0,1}(X)-w^{S=1,G=1,I=0}_{i0,0}(X)][Y_{i1}(0,0)-Y_{i0}(0,0)]\bigg]\nonumber
\end{align}
with $w_{s,g}$ for a generic stratum $s$, group $g$. Let us define the ASU as
\begin{align}
ASU^{S=1,G=0,I=1}=& \mathbb{E}\bigg[w_i^{S=1,G=0,I=1} [Y_{i1}(0,1) -Y_{i0}(0,0)]\bigg]+ \label{eq::lemma22}\\&- \mathbb{E}\bigg[[w^{S=1,G=0,I=1}_{i1,0}(X)+w^{S=1,G=0,I=1}_{i0,1}(X)-w^{S=1,G=0,I=1}_{i0,0}(X)][Y_{i1}(0,0)-Y_{i0}(0,0)]\bigg]\nonumber
\end{align}
with $w_{s,i}$ for a generic stratum $s$, group $i$. 
\label{lemma::w_02}
\end{lemma}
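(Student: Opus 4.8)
The plan is to verify that the two inverse-probability-weighting expressions in \eqref{eq::lemma21} and \eqref{eq::lemma22} coincide with the structural estimands $ATT(S=1,G=1,I=0)$ and $ASU(S=1,G=0,I=1)$; I would prove the ATT identity in full and obtain the ASU identity by the symmetric argument. First I would simplify the leading term. Because $w_i^{S=1,G=1,I=0}$ is the self-normalized indicator of the treated cell, $\mathbb{E}[w_i^{S=1,G=1,I=0}Z_i]=\mathbb{E}[Z_i\mid S=1,G=1,I=0]$ for any integrand $Z_i$, so the first term of \eqref{eq::lemma21} equals $\mathbb{E}[Y_{i1}(1,0)-Y_{i0}(0,0)\mid S=1,G=1,I=0]$. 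Here consistency together with Assumption \ref{ass::nospillover} identifies the realized post-period outcome of the treated cell with $Y_{i1}(1,0)$, while Assumption \ref{ass::anticipation} pins the baseline to $Y_{i0}(0,0)$.

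Next I would treat the three weighted control terms, which are well posed because the pure-control cells are non-empty by Assumption \ref{ass::observationofI}. In each of the cells $(S=1,G=0,I=0)$, $(S=0,G=1,I=0)$ and $(S=0,G=0,I=0)$ the unit is untreated and unexposed to spillovers — the latter holding either because $I=0$ or, in the placebo stratum, by stratum SUTVA (Assumption \ref{ass::sutva}) — so consistency gives that the observed change $Y_{i1}-Y_{i0}$ equals $Y_{i1}(0,0)-Y_{i0}(0,0)$ there. I can therefore apply Lemma \ref{lemma::w_01} to each of the three weights and rewrite the bracketed second term as the treated-cell average
\[
\mathbb{E}\!\left[\mu_{1,0}(X)+\mu_{0,1}(X)-\mu_{0,0}(X)\ \middle|\ S=1,G=1,I=0\right],
\qquad \mu_{s,g}(X)\equiv\mathbb{E}[Y_{i1}(0,0)-Y_{i0}(0,0)\mid X,S=s,G=g,I=0].
\]

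The crux is then the conditional parallel trend-in-trends assumption: Assumption \ref{ass:tinti_c} states precisely that $\mu_{1,0}(X)+\mu_{0,1}(X)-\mu_{0,0}(X)$ equals the treated cell's own counterfactual trend $\mathbb{E}[Y_{i1}(0,0)-Y_{i0}(0,0)\mid X,S=1,G=1,I=0]$ pointwise in $X$. Substituting and taking the outer expectation over the treated cell's covariate law collapses the second term to $\mathbb{E}[Y_{i1}(0,0)-Y_{i0}(0,0)\mid S=1,G=1,I=0]$; subtracting it from the first term cancels the $Y_{i0}(0,0)$ baselines and leaves $\mathbb{E}[Y_{i1}(1,0)-Y_{i1}(0,0)\mid S=1,G=1,I=0]=ATT(S=1,G=1,I=0)$, as required.

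For \eqref{eq::lemma22} I would rerun the same four steps with the roles of $G$ and $I$ interchanged: the leading term yields $\mathbb{E}[Y_{i1}(0,1)-Y_{i0}(0,0)\mid S=1,G=0,I=1]$, the control weights are handled by identity \eqref{eq::lemma12} of Lemma \ref{lemma::w_01}, and the recombination invokes the interference-group conditional parallel trend-in-trends (Assumption \ref{ass:tinti1_c}) in place of Assumption \ref{ass:tinti_c}. I expect the main difficulty to be bookkeeping rather than conceptual: one must track which potential outcome is realized in each of the four cells — in particular justifying via stratum SUTVA that the placebo-stratum interference cell $(S=0,G=0,I=1)$ still carries $Y(0,0)$ — and check that the $+w_{1,0}+w_{0,1}-w_{0,0}$ sign pattern in the definitions matches the rearranged trend-in-trends identity term by term, so that the recombination in the third step is exact rather than merely approximate.
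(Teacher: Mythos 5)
Your proposal is correct and is essentially the paper's proof run in reverse: the paper starts from the structural $ATT(S=1,G=1,I=0)$, inserts $\Delta^{TT}_{I=0,X}=0$ (Assumption \ref{ass:tinti_c}) to expand it into the four conditional-on-$X$ trends, and then applies Lemma \ref{lemma::w_01} together with the definition of $w_i^{S=1,G=1,I=0}$ to arrive at the weighted expression, whereas you start from the weighted expression and collapse it back to the ATT using exactly the same ingredients (the self-normalized weight identity, Lemma \ref{lemma::w_01}, and the conditional parallel trend-in-trends assumption applied pointwise in $X$). The only cosmetic difference is your invocation of consistency and Assumptions \ref{ass::anticipation} and \ref{ass::nospillover}, which is superfluous here because both sides of the lemma are stated purely in terms of potential outcomes; that bookkeeping belongs to the proof of Proposition 5, not to this lemma.
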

\begin{proof}
We prove the result in Equation \eqref{eq::lemma21}, as the result in Equation \eqref{eq::lemma22} follows analogously.
By the law of iterative expectation, let us define the $ATT^{\tiny{S=1,G=1,I=0}}$ as
\begin{align}
ATT^{S=1,G=1,I=0}&=\mathbb{E}[ATT^{S=1,G=1,I=0,X} |S=1,G=1,I=0] \\
&=\mathbb{E}[\mathbb{E}[Y_{i1}(1,0)-Y_{i1}(0,0)|S=1,G=1,I=0,X] |S=1,G=1,I=0]. \nonumber
\end{align}
Under \textcolor{black}{Assumption 9}, we can easily reexpress $ATT^{\tiny{S=1,G=1,I=0}}$  as 
\begin{align*}\nonumber
 =&\mathbb{E}[\mathbb{E}[Y_{i1}(1,0)-Y_{i1}(0,0)|S=1,G=1,I=0,X]+\Delta_{i \mid I=0,X}^{TT}|S=1,G=1,I=0]
\\
=& \mathbb{E}\bigg[  \mathbb{E}[Y_{i1}(1,0) -Y_{i0}(0,0)\mid S = 1, G = 1, I=0, X] - \mathbb{E}[Y_{i1}(0,0)-Y_{i0}(0,0) \mid S = 1, G = 0, I=0, X]+\nonumber\\
&-  \mathbb{E}[Y_{i1}(0,0)-Y_{i0}(0,0) \mid S = 0, G = 1, I=0, X] + \mathbb{E}[Y_{i1}(0,0)-Y_{i0}(0,0) \mid S = 0, G = 0, I=0, X]\\
&\bigg|S=1,G=1,I=0\bigg]\\
=& \mathbb{E}\bigg[  \mathbb{E}[Y_{i1}(1,0) -Y_{i0}(0,0)\mid S = 1, G = 1, I=0, X]\bigg|S=1,G=1,I=0\bigg]+ \\
&- \mathbb{E}\bigg[(w^{S=1,G=1,I=0}_{i1,0}(X)+w^{S=1,G=1,I=0}_{i0,1}(X)-w^{S=1,G=1,I=0}_{i0,0}(X))(Y_{i1}(0,0)-Y_{i0}(0,0))\bigg]\\
=& \mathbb{E}\bigg[ Y_{i1}(1,0) -Y_{i0}(0,0)\bigg|S=1,G=1,I=0\bigg] +\\&- \mathbb{E}\bigg[(w^{S=1,G=1,I=0}_{i1,0}(X)+w^{S=1,G=1,I=0}_{i0,1}(X)-w^{S=1,G=1,I=0}_{i0,0}(X))(Y_{i1}(0,0)-Y_{i0}(0,0))\bigg]\\
=& \mathbb{E}\bigg[w_i^{S=1,G=1,I=0} (Y_{i1}(1,0) -Y_{i0}(0,0))\bigg] +\\&- \mathbb{E}\bigg[(w^{S=1,G=1,I=0}_{i1,0}(X)+w^{S=1,G=1,I=0}_{i0,1}(X)-w^{S=1,G=1,I=0}_{i0,0}(X))(Y_{i1}(0,0)-Y_{i0}(0,0))\bigg].
\end{align*}
The second equality is simple algebra, the third equality is given by Lemma \ref{lemma::w_01}, the fourth equality is given by the law of iterative expectation, and the fifth equality follows from the definition of $w_i^{S=1, G=1, I=0}$.
\end{proof}

\subsection{Proof of Proposition 5}
\begin{proof}
Let us rexpress though simple algebra the Equation \textcolor{black}{(4)} as
\begin{align*}
\delta_{\texttt{dr}} &= \mathbb{E} \bigg[ 
\Big( w^{S=1,G=1,I=0}_{i} - w^{S=1,G=1,I=0}_{i1,1}(X)  - w^{S=1,G=1,I=0}_{i0,1}(X) +w^{S=1,G=1,I=0}_{i0,0}(X) \Big) 
( Y_{i1} - Y_{i0} )  \bigg]+ \nonumber \\
&-\mathbb{E} \bigg[ 
\Big( w^{S=1,G=1,I=0}_{i} - w^{S=1,G=1,I=0}_{i1,1}(X) \Big) 
m_i^{S=1,G=0,I=0}(X) \bigg]+ \nonumber \\
& - \mathbb{E} \bigg[ 
\Big( w^{S=1,G=1,I=0}_{i} - w^{S=1,G=1,I=0}_{i0,1}(X) \Big) 
m_i^{S=0,G=1,I=0}(X) \bigg]+ \nonumber \\
& + \mathbb{E} \bigg[ 
\Big( w^{S=1,G=1,I=0}_{i} - w^{S=1,G=1,I=0}_{i0,0}(X) \Big) 
m_i^{S=0,G=0,I=0}(X) \bigg], \quad \forall i.
\end{align*}
Let us expand it to incorporate the potential outcome framework as
\begin{align*}
\delta_{\texttt{dr}} &= \mathbb{E} \bigg[ 
 w^{S=1,G=1,I=0}_{i}  
[ Y_{i1}(0,1) - Y_{i0}(0,0)] \bigg]+\\
&-\mathbb{E} \bigg[ 
\Big( w^{S=1,G=1,I=0}_{i1,1}(X) + w^{S=1,G=1,I=0}_{i0,1}(X) -w^{S=1,G=1,I=0}_{i0,0}(X) \Big) 
[Y_{i1}(0,0) - Y_{i0}(0,0) ]  \bigg]+ \nonumber \\
&-\mathbb{E} \bigg[ 
\Big( w^{S=1,G=1,I=0}_{i} - w^{S=1,G=1,I=0}_{i1,1}(X) \Big) 
m_i^{S=1,G=0,I=0}(X) \bigg]+ \nonumber \\
& - \mathbb{E} \bigg[ 
\Big( w^{S=1,G=1,I=0}_{i} - w^{S=1,G=1,I=0}_{i0,1}(X) \Big) 
m_i^{S=0,G=1,I=0}(X) \bigg]+ \nonumber \\
& + \mathbb{E} \bigg[ 
\Big( w^{S=1,G=1,I=0}_{i} - w^{S=1,G=1,I=0}_{i0,0}(X) \Big) 
m_i^{S=0,G=0,I=0}(X) \bigg].
\end{align*}
By applying Lemma \ref{lemma::w_02}, we can state that
\begin{align*}
\delta_{\texttt{dr}} &=ATT^{S=1,G=1,I=0}
-\mathbb{E} \bigg[ 
\Big( w^{S=1,G=1,I=0}_{i} - w^{S=1,G=1,I=0}_{i1,1}(X) \Big) 
m_i^{S=1,G=0,I=0}(X) \bigg]+ \nonumber \\
& - \mathbb{E} \bigg[ 
\Big( w^{S=1,G=1,I=0}_{i} - w^{S=1,G=1,I=0}_{i0,1}(X) \Big) 
m_i^{S=0,G=1,I=0}(X) \bigg] +\nonumber \\
& + \mathbb{E} \bigg[ 
\Big( w^{S=1,G=1,I=0}_{i} - w^{S=1,G=1,I=0}_{i0,0}(X) \Big) 
m_i^{S=0,G=0,I=0}(X) \bigg] =ATT^{S=1,G=1,I=0}.
\end{align*}
The latter equality is due to $w^{S=1,G=1,I=0}_{i} - w^{S=1,G=1,I=0}_{i1,1}(X)= w^{S=1,G=1,I=0}_{i} - w^{S=1,G=1,I=0}_{i0,1}(X)=w^{S=1,G=1,I=0}_{i} - w^{S=1,G=1,I=0}_{i0,0}(X)=0$, as it can be easily proven through the law of iterated expectation.
\end{proof}
\subsection{Proof of Proposition 6}
\begin{proof}
The proof proceeds analogously to that of Proposition~5, upon replacing 
$G = g$ (for all $g$) with $G = 0$, 
and $I = 0$ with $I = j$, 
redefining the weights as $w^{S=1,G=0,I=1}_{isj}(X)$ instead of $w^{S=1,G=1,I=0}_{isg}(X)$, considering \textcolor{black}{Assumption 10 rather than Assumption 9},
and substituting the potential outcomes $Y_{i1}(1,0)$ with $Y_{i1}(0,1)$ for the generic $i$.
\end{proof}

\subsection{Details about the simulation study of Subsection 5.2}

Time-invariant covariates are generated in line with \citet{kang2007demystifying, sant2020doubly} as follows. Firstly, for each unit $i$, $Z_i =(Z_{i1},Z_{i2},Z_{i3},Z_{i4})'$ is generated by a $\mathcal{N}(0, I_4)$ with $I_4$, a $4\times 4$ identity matrix. Then, a transformation is operated as follows $\tilde{X}_i=(\exp(0.5Z_{i1}),10+Z_{i2}/(1+\exp(Z_{i1})), (0.6+Z_{i1}Z_{i3}/25)^3, (20+Z_{i1}+Z_{i4})^2)'$. The resulting $X_i$ is the standardized version of $\tilde{X}_i$. 
The predictor are then defined in line with \citet{ortiz2025better} as follows:
\begin{align*}
    f^{ps}_{S=s, G=g}(X_i) &= (0.05 S_i + 0.2 (1-S_i)) X_i'\gamma_{s,g},\\
     f^{reg}_{S=s, G=g}(X_i) &= 2010 + S_i X_i' \beta_1 +  (1-S_i) X_i' \beta_0,
\end{align*}
 with $\gamma_{0,0}=(-1, 0.5, -0.25, -0.1)'$, $\gamma_{0,1}=(-0.5, 2, 0.5, -0.2)'$, $\gamma_{i1,0}=(3, -1.5, 0.75, -0.3)'$, $\beta_1 = (27.4,13.7,13.7,13.7)'$ and $\beta_0 = 0.5 \beta_1$.  Note that $f^{ps}_{S=1, G=1}(X_i)=1$. 
 Lastly, $\nu_i(X_i, S_i, G_i) \sim \mathcal{N}(2010 G_i + S_i G_i X_i' \beta_1 +  (1-S_i) G_i X_i' \beta_0,1)$. 
See \citet{ortiz2025better} for additional details.  Consider U to be a uniform random variable in $[0,1]$, the assignment process of units to sub-groups is as follows
\begin{align}
    \begin{cases}
        S=0 \quad \text{and} \quad G=0 \quad \text{if} \quad U \leq p^{S=0,P=0}(X)\\
          S=0 \quad \text{and}\quad  G=1  \quad \text{if} \quad p^{S=0,P=0}(X) < U \leq \sum_{j \in \lbrace 0,1 \rbrace} p^{S=0, P=j}(X)\\
          S=1 \quad \text{and}\quad  G=0   \quad \text{if} \quad \sum_{j \in \lbrace 0,1 \rbrace} p^{S=0,P=J}(X) < U \leq \sum_{j \in \lbrace 0,1 \rbrace} p^{S=0, P=j}(X) + p^{S=1, P=0}(X)\\
            S=1 \quad \text{and}\quad  G=1 \quad \text{if} \quad U >\sum_{j \in \lbrace 0,1 \rbrace} p^{S=0, P=j}(X) + p^{S=1, P=0}(X).\\
    \end{cases}
\end{align}

\subsection{Additional details on the application of Section 6: SEZ in Campania}

\subsubsection*{Event study specifications to test parallel trends and parallel trend-in-trends assumptions}

Let $L_{t=j}$ with $j=2014,\dots,2017$ denote a lead indicator variable which is one when $t=j$ and 0 otherwise, with 2013 serving as the omitted base year. Any departure from parallel trend in a given year is captured by the coefficients $\delta_j$. 
As a starting point, let us consider the DiD-style parallel trends requirement. This can be probed by estimating the following equation in the pre-policy period:
\begin{equation}
	\label{eq:tt_DiD}
	\begin{split}
		Y_{it}
		= \beta_i + \beta_t + \sum_{j=2014}^{2017} \delta_j\big(S_i \times L_{t=j}\big) + \varepsilon_{it}, \quad t=2013,\dots,2017. \\
	\end{split}
\end{equation}
where $\beta_i$ and $\beta_t$ denote firm and year fixed effects. Under parallel trends, the pre-event coefficients $\delta_j$ should be jointly equal to zero. We estimate the model in Equation \eqref{eq:tt_DiD} on the sub-samples induced by the partition set (i) $\mathcal{T}$ (targeted sectors); (ii) $\mathcal{I} \cup \mathcal{C}$ (all non-targeted sectors); (iii) $\mathcal{I}$ (non-targeted sectors exposed to spillover effects).  
Analogously, parallel trend-in-trends assumption can be examined by interacting the leads with both partitions of the TD design:
\begin{equation}
	\label{eq:es_tt_baseline}
	\begin{split}
		Y_{it}
		= &\beta_i + \beta_t + \sum_{j=2014}^{2017} \gamma_j\big(S_i \times L_{t=j}\big) + \sum_{j=2014}^{2017} \eta_j\big(G_i \times L_{t=j}\big)
		+ \sum_{j=2014}^{2017} \delta_j\big(S_i \times G_i \times L_{t=j}\big)	+ \varepsilon_{it}, 
	\end{split}
\end{equation}
for $t=2013,\dots,2017$.
In this specification, the coefficients of interest are the lead coefficients on $S_i \times G_i \times L_{t=j}$: failing to reject $\delta_j=0$ for all $j$ provides evidence consistent with parallel trend-in-trends assumptions.

\subsubsection*{Three-way fixed effect specifications of TD and DTD models and additional results}

Three-way fixed effect specifications as implemented in the empirical application for the TD model:
\begin{align}
	Y_{it} \;=\;& \beta_i + \beta_t + \beta_{t,S} \times S_i +  \beta_{t,G} \times G_i  + \delta (S_i \times G_i \times T_{t}) + \epsilon_{it},
\end{align}
with $\beta_{t,G}$ and $\beta_{t,S}$ group-specific time fixed effects;
and for the DTD model:
\begin{align}
Y_{it} \;=\;& \beta_i + \beta_t + \beta_{t,S} \times S_i +  \beta_{t,G} \times G_i  + \delta (S_i \times G_i \times T_{t}) 
	+ \psi (S_i\times I_i\times T_{t})
	+ \epsilon_{it}.
\end{align}
Table \ref{tab:triple_results} integrates the estimates shown in Table 4 of the paper, by providing the coefficients of yearly interactions with the indicator variables for the treatment  and target group partitions, using 2020 as the baseline year. 
\begin{table}[hb]
\centering
\begin{threeparttable}
\caption{Three-way fixed effect TD and DTD estimates -- yearly interactions coefficients.}
\label{tab:triple_results}
\begin{tabular}{lcccc}
\toprule
 & \multicolumn{2}{c}{\textbf{Triple}} & \multicolumn{2}{c}{\textbf{Double-triple}} \\
\cmidrule(lr){2-3} \cmidrule(lr){4-5}
 & Coef. & S.E. & Coef. & S.E. \\
\midrule
S $\times$ 2013 & -0.066*** & (0.009) & -0.052*** & (0.009) \\
S $\times$ 2014 & -0.053*** & (0.008) & -0.040*** & (0.008) \\
S $\times$ 2015 & -0.022**  & (0.007) & -0.009    & (0.007) \\
S $\times$ 2016 & -0.016*   & (0.007) & -0.003    & (0.007) \\
S $\times$ 2017 & -0.004    & (0.006) & 0.009     & (0.006) \\
S $\times$ 2018 & -0.014**  & (0.005) & -0.014**  & (0.005) \\
S $\times$ 2019 & -0.002    & (0.004) & -0.002    & (0.004) \\
G $\times$ 2013 & -0.014    & (0.014) & -0.014    & (0.014) \\
G $\times$ 2014 & -0.013    & (0.013) & -0.013    & (0.013) \\
G $\times$ 2015 & -0.004    & (0.012) & -0.004    & (0.012) \\
G $\times$ 2016 & 0.000     & (0.012) & 0.000     & (0.012) \\
G $\times$ 2017 & -0.008    & (0.011) & -0.008    & (0.011) \\
G $\times$ 2018 & -0.030*** & (0.007) & -0.030*** & (0.007) \\
G $\times$ 2019 & -0.021*** & (0.005) & -0.021*** & (0.005) \\
\bottomrule
\end{tabular}

\begin{tablenotes}
\scriptsize
\item \textit{Notes:} The table reports OLS estimates of yearly interaction coefficients from three-way fixed effects TD and DTD models. $S$ identifies firms in Campania SEZ municipalities, $G$ identifies firms in targeted sectors.
\item Standard errors are reported in parentheses. $^{+}p<0.1$, * $p<0.05$, ** $p<0.01$, *** $p<0.001$.
\end{tablenotes}
\end{threeparttable}
\end{table}

\subsubsection*{Robustness check}

\begin{figure}
\centering 
\includegraphics[width=\textwidth]{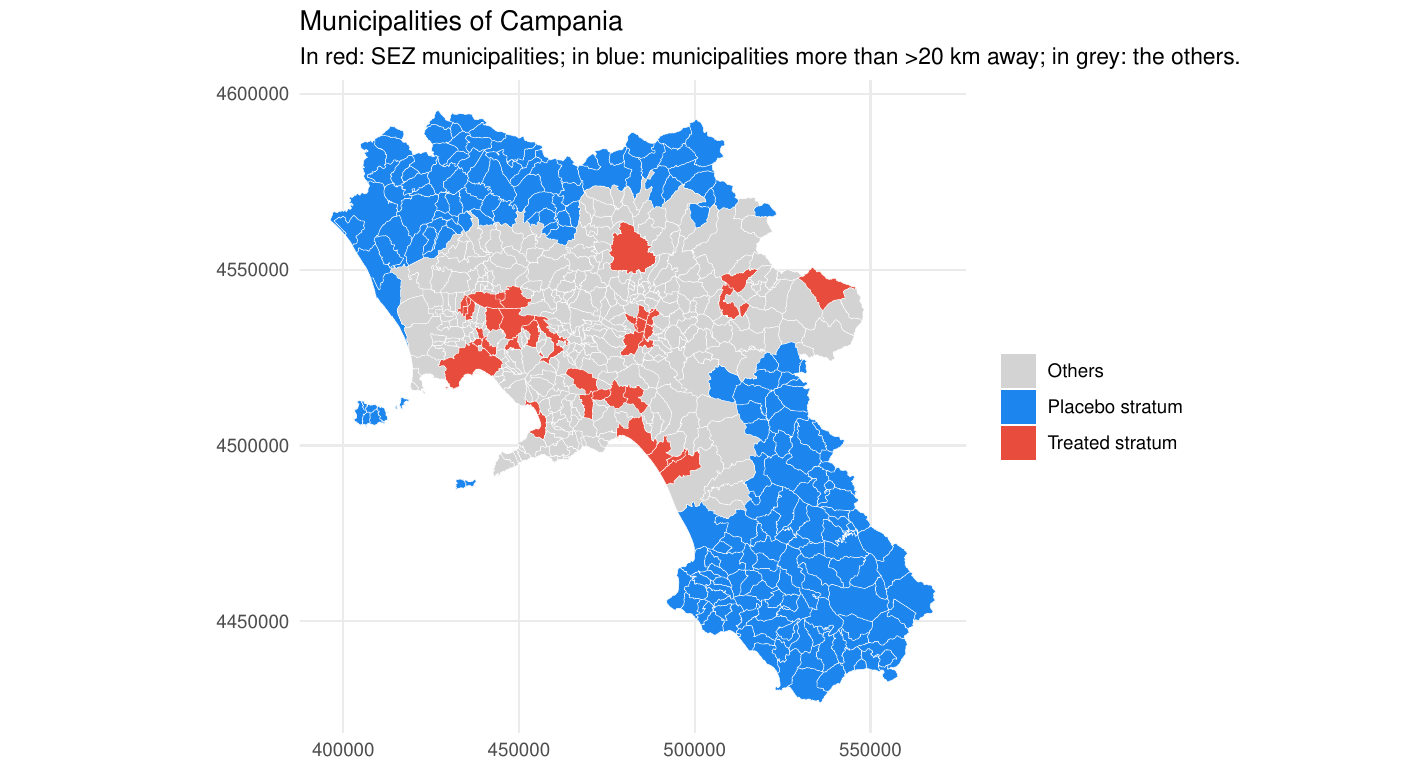} 
\caption{Map of municipalities in Campania used as control group in robustness check analysis (in blue), excluded municipalities are displayed in grey. } \label{fig:mapcampania}
\end{figure}

\begin{figure}
\centering
\includegraphics[width=\textwidth]{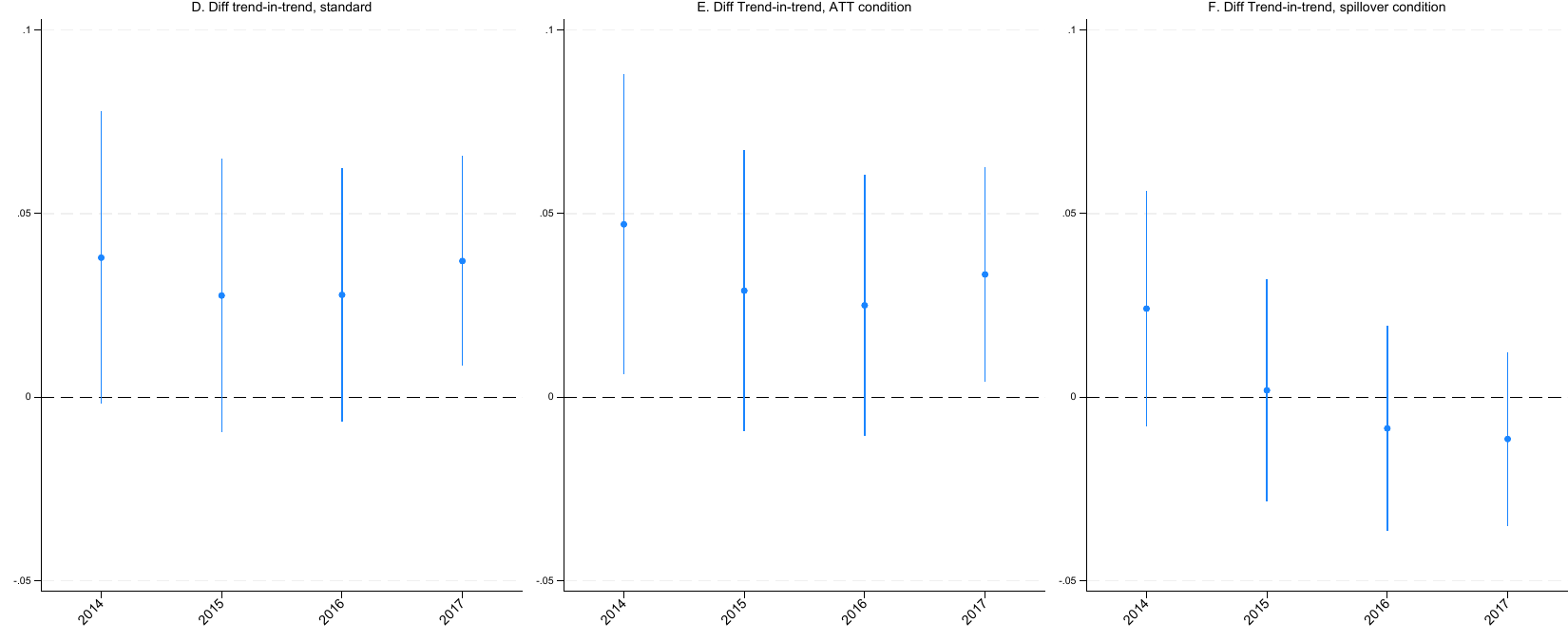} 
\caption{Tests on parallel trends and trend-in-trends assumptions in pre-policy years (2013-2017), considering distant Campania municipalities as the placebo stratum.} \label{fig:es_test_rob}
\end{figure}

As a robustness check for our empirical application, we redefine the placebo stratum using selected municipalities in Campania located outside the SEZ, rather than municipalities in Sicily. To mitigate concerns about geographical spillovers, we restrict the placebo municipalities to those at least 20 km away from treated municipalities, as illustrated in Figure \ref{fig:mapcampania}. Table~\ref{tab:triple_double} reports estimates from the three-way fixed-effects TD and DTD specifications with year-specific interactions. In the DTD model, the ATT is identified by the coefficient on the triple interaction \(T \times S \times G\). The estimate is statistically significant and implies an average \(9\%\) increase for treated firms. Over the same period, non-eligible firms exposed to spillovers and located in the SEZ area experienced a \(3.8\%\) increase (ASU), consistent with our main results. By contrast, the TD model---which implicitly rules out spillovers---yields a smaller effect of \(7.7\%\). Figure~\ref{fig:es_test_rob} reports tests of the parallel trend-in-trends assumptions using pre-policy data. Panels~D and~E suggest some slight deviations from the trend-in-trends assumption. Panel~F, instead, supports the parallel trend-in-trends condition relevant for identifying the ASU: the estimated \(3.8\%\) spillover effect coincides with the estimate obtained when Sicily is used as the placebo stratum.
    
\begin{table}[H]
\centering
\begin{threeparttable}
\caption{Three-way fixed effect TD and DTD estimates -- robustness check with distant municipalities in Campania as control group.}
\label{tab:triple_double}
\begin{tabular}{lcccc}
\toprule
 & \multicolumn{2}{c}{\textbf{Triple}} & \multicolumn{2}{c}{\textbf{Double-triple}} \\
\cmidrule(lr){2-3} \cmidrule(lr){4-5}
 & Coef. & S.E. & Coef. & S.E. \\
\midrule
S $\times$ 2013 & 0.040*   & (0.016) & 0.053*** & (0.016) \\
S $\times$ 2014 & 0.016    & (0.014) & 0.029*   & (0.014) \\
S $\times$ 2015 & 0.022+   & (0.013) & 0.035**  & (0.013) \\
S $\times$ 2016 & 0.029*   & (0.012) & 0.042*** & (0.012) \\
S $\times$ 2017 & 0.031**  & (0.011) & 0.045*** & (0.011) \\
S $\times$ 2018 & -0.003   & (0.010) & -0.003   & (0.010) \\
S $\times$ 2019 & 0.013    & (0.009) & 0.013    & (0.009) \\
S $\times$ 2020 & 0.011    & (0.007) & 0.011    & (0.007) \\
G $\times$ 2013 & -0.015   & (0.024) & -0.015   & (0.024) \\
G $\times$ 2014 & -0.014   & (0.023) & -0.013   & (0.023) \\
G $\times$ 2015 & -0.004   & (0.022) & -0.003   & (0.022) \\
G $\times$ 2016 & 0.002    & (0.021) & 0.002    & (0.021) \\
G $\times$ 2017 & -0.004   & (0.021) & -0.004   & (0.021) \\
G $\times$ 2018 & -0.052***& (0.010) & -0.052***& (0.010) \\
G $\times$ 2019 & -0.050***& (0.009) & -0.050***& (0.009) \\
G $\times$ 2020 & -0.022** & (0.007) & -0.022** & (0.007) \\
T $\times$ S $\times$ G & 0.077*** & (0.022) & 0.090*** & (0.022) \\
T $\times$ S $\times$ I &          &          & 0.038*** & (0.008) \\
\midrule
Num. Obs. & \multicolumn{2}{c}{335,222} & \multicolumn{2}{c}{335,222} \\
AIC       & \multicolumn{2}{c}{924,974.2} & \multicolumn{2}{c}{924,884.6} \\
BIC       & \multicolumn{2}{c}{3,877,384.9} & \multicolumn{2}{c}{3,877,284.7} \\
\bottomrule
\end{tabular}

\begin{tablenotes}
\scriptsize
\item \textit{Notes:} The Triple and Double-Triple columns report OLS estimates from three-way fixed effect specifications with firm fixed effects and interactions between year fixed effects and group indicators. $S$ identifies firms in Campania SEZ municipalities, $G$ identifies firms in targeted sectors, $I$ identifies firms in non-targeted sectors exposed to spillovers, and $T$ identifies policy periods (2018--2020).
\item Standard errors are reported in parentheses. $^{+}p<0.1$, * $p<0.05$, ** $p<0.01$, *** $p<0.001$.
\end{tablenotes}
\end{threeparttable}
\end{table}
\end{document}